\documentclass[12pt]{article}

\pdfoutput=1 
\usepackage{jheppub} 
\usepackage[utf8]{inputenc}
\usepackage{jheppub}
\usepackage{epsfig}
\usepackage{bbm}
\usepackage{bbding}
\usepackage{url}
\usepackage{amsmath}
\usepackage{amsfonts}
\usepackage{amssymb}
\usepackage{mathtools}
\usepackage{dsfont}
\usepackage{bm}
\usepackage{graphicx}
\usepackage{longtable}
\usepackage{pdflscape}
\usepackage{subcaption}
\usepackage{color}
\usepackage{psfrag}
\usepackage[capitalise]{cleveref}
\usepackage{hyperref}
\usepackage{tikz}
\usetikzlibrary{positioning,arrows.meta}
\usepackage{pgfplots}
\usepackage{pgfplotstable}
\usepackage{subfiles}
\usepackage{longtable}
\usepackage{tabularx}
\usepackage{ltablex}
\usepackage{booktabs}
\usepackage[export]{adjustbox}
\usepackage{listings}
\usepackage{multirow} 
\usepackage{cancel,slashed}
\usepackage{graphicx}
\usepackage{latexsym}
\usepackage{transparent}
\usepackage{array}
\usepackage{makecell}
\usepackage{graphics,psfrag}
\usepackage{nowidow}
\usepackage{listings}
\usepackage[normalem]{ulem}
\usepackage{environ}
\usepackage{tikz}	
\pgfplotsset{
        compat=1.9,
        compat/bar nodes=1.8,
    }

\usepackage{amsthm}
\usepackage{array}
\usepackage{booktabs}
\usepackage{mathrsfs}
\usepackage{setspace}
\usepackage{physics}
\usepackage{simpler-wick}
\usepackage[capitalise]{cleveref}

\usepackage[most]{tcolorbox} 
\usepackage{amsmath,amsthm,amsfonts,mathtools, amssymb}

\usepackage{tabulary}                     

\usepackage{multicol}
\usepackage{pgfplots}
\pgfplotsset{compat=1.15}
\usetikzlibrary{arrows}
\definecolor{ccqqqq}{rgb}{0.8,0,0}
\definecolor{qqwuqq}{rgb}{0,0.39215686274509803,0}
\definecolor{qqqqff}{rgb}{0,0,1}
\definecolor{ffvvqq}{rgb}{1,0.3333333333333333,0}
\definecolor{cqcqcq}{rgb}{0.7529411764705882,0.7529411764705882,0.7529411764705882}

\theoremstyle{definition}
\newtheorem{theorem}{Theorem}[section]
\newtheorem{definition}[theorem]{Definition}

\newtheorem{corollary}[theorem]{Corollary}

\newcommand{\Z}{\mathbb{Z}}

\newcommand{\C}{\mathbb{C}}

\newcommand{\slz}{\text{SL}(2,\mathbb Z)}

\renewcommand{\t}{(\tau)}

\newcommand{\spg}{\text{Sp}(2g,\mathbb Z)}

\newcommand{\spz}{\text{Sp}(4,\mathbb Z)}

\renewcommand{\t}[0]{(\tau)}

\newcommand{\coma}{\, , \quad}
\newcommand{\fstop}{\, .}
\newcommand{\sef}{\mathcal{E}_4}
\newcommand{\ses}{\mathcal{E}_6}
\newcommand{\sew}[1]{\mathcal{#1}}

\newcommand{\mc}{\mathcal}
\newcommand{\h}{\mathbb{H}}

\def\ord{\text{ord}}

\def\im{{\text{Im} \,}}

\definecolor{AWG}{RGB}{212,175,55} 
\definecolor{JMLB}{RGB}{4, 55, 242} 

\newcommand{\env}[2]{\begin{#1} #2 \end{#1}}

\graphicspath{{Figures/}}

\begin{document}

	\pagestyle{plain}

	\makeatletter
	\@addtoreset{equation}{section}
	\makeatother
	\renewcommand{\theequation}{\thesection.\arabic{equation}}
	\pagestyle{empty}

\rightline{DESY-26-063}
\vspace{2cm}

\begin{center}
	\LARGE{\bf Automorphic Structures of \\ Heterotic Vacua\vspace{4mm}}\\
	\large{
    Jacob M. Leedom,\textsuperscript{\emph{a}} Nicole Righi,\textsuperscript{\emph{b},\emph{c}}  Alexander Westphal\textsuperscript{\emph{d}}\\[4mm]}
	\footnotesize{
    \textsuperscript{\emph{a}}CEICO, Institute of Physics of the Czech Academy of Sciences\\
Na Slovance 2, 182 00 Prague 8, Czech Republic\\
	\textsuperscript{\emph{b}}Scuola Normale Superiore, Piazza dei Cavalieri 7, 56126 Pisa, Italy\\
    \textsuperscript{\emph{c}}INFN, Sezione di Pisa, Largo Bruno Pontecorvo 3, 56127 Pisa, Italy\\
    \textsuperscript{\emph{d}}Deutsches Elektronen-Synchrotron DESY, Notkestr. 85, 22607 Hamburg, Germany}

\vspace*{10mm}

\small{\bf Abstract} 
\\[4mm]
\end{center}
\begin{center}
\begin{minipage}[h]{\textwidth}
We study moduli stabilization in 4D effective field theories with Sp(4,$\mathbb{Z}$) self-duality inspired by heterotic orbifold compactifications with Wilson lines. The target-space duality group of these theories is enhanced from SL$(2,\mathbb{Z})$ to Sp$(4,\mathbb{Z})$, making Siegel modular forms the appropriate language to formulate the effective supergravity action. We construct the corresponding effective theory including moduli-dependent threshold corrections to the gauge kinetic function and nonperturbative effects in the superpotential. The degeneration limit of the Wilson lines distinguishes different sectors and dictates which combination of cusp forms appears in threshold corrections. We compute the resulting scalar potential and prove several general statements about its extrema. In particular, we show that the fixed points of Sp$(4,\mathbb{Z})$ are extrema of the potential, and derive genus-2 analogues of no-go theorems for de Sitter vacua. Finally, we show how positive-energy metastable minima can arise once supersymmetry is broken in the dilaton direction by nonperturbative contributions to the K\"ahler potential.

\end{minipage}
\end{center}

	\newpage
	\setcounter{page}{1}
	\pagestyle{plain}
	\renewcommand{\thefootnote}{\arabic{footnote}}
	\setcounter{footnote}{0}
	
	\tableofcontents

\setcounter{page}{1}

\section{Introduction}
\label{sec:intro}
Dualities play an essential role in the modern understanding of quantum field theory and string theory and their underlying mathematical structure. Of particular use are \textit{self-dualities}, which map different regions of a parameter space to each other. Moreover, such dualities are increasingly found to constrain the scalar potential and the structure and distribution of the resulting vacua (i.e.~local minima of the scalar moduli potential) of string compactifications as well. The quintessential example is S-duality in $10$D type IIB string theory; 
it restricts the effective action of the theory and implies that the type IIB pseudo-action must be a \textit{modular function}, i.e.~it must be invariant under modular transformations.
This modular invariance manifests itself by restricting the functions that can appear in the effective action: they must themselves be modular functions, or at least modular forms with well-defined transformation properties under $\slz$. A relevant example of such functions is the real-analytic Eisenstein series that shows up as the coefficient of the $R^4$ correction to the type IIB effective action~\cite{Green:1997tv,Green:2016tfs}. 

It is natural to wonder if similarly powerful statements can exist in compactifications of string theory and be relevant for applications of string theory to particle phenomenology and cosmology. In the context of toroidal orbifold compactifications of heterotic string theory, the relevant self-duality is T-duality, which acts on geometric moduli as opposed to the string coupling. 
A heterotic compactification on $T^6$ will be associated with an even, self-dual lattice, the Narain lattice~\cite{Narain:1985jj,Narain:1986am}. For a $6$D compactification manifold, this is a lattice of signature $(6,22).$ 
The T-duality group then corresponds to the outer automorphism group of this lattice.  This defines the geometric moduli space of the compactification as
\begin{equation}
    SO(2,2)/SO(2)\times SO(2)\,.
\end{equation}
As a self-duality of the theory, T-duality must manifest as a target space symmetry of the effective field theory (EFT). This directly implies that the action of the EFT must be invariant under the action of any element of
$SO(2,2,\mathbb{Z})\cong \slz$ -- that is, the action must be an \emph{automorphic function}. This point has been emphasized and utilized in a number of works~\cite{Ferrara:1989bc,Font:1990nt,Cvetic:1991qm,Gonzalo:2018guu,Leedom:2022zdm,Cribiori:2023sch}. The simplest example is an $\slz$ subgroup acting on an overall K\"{a}hler modulus $T$ as
\begin{equation}
    T\rightarrow \gamma\cdot T = \frac{aT+b}{cT+d}\,.
\end{equation}
Then, enforcing this transformation to be a symmetry of the effective action allows one to restrict the types of functions that can appear in the effective theory. 

The symmetry group is enhanced once one takes into account the presence of Wilson lines. The classical moduli space as well and the classical T-duality transformations for heterotic string compactifications with $p$ Wilson lines is given by the coset
\begin{align}
    SO(2+p,2)/SO(2+p)\times SO(2)\,,
\end{align} and the T-duality group is given by $SO(2+p, 2, \mathbb{Z})$.
 The simplest extension to theories with $\slz$ modular group is then considering one Wilson line, so that the duality group is enhanced to $SO(3,2,\mathbb{Z})\cong \spz$. The structure of this larger duality group also implies that besides the overall K\"ahler modulus $T$ and the Wilson line $Z$ we need to include the relevant bulk complex structure modulus $U$ as well in order to parametrize the so-called \emph{Siegel upper half space of degree 2}: this is the manifold spanned by the three overall moduli $T,U,Z$ of a toroidal heterotic string compactification, of which $\spz$ is the modular symmetry.
 
 The general framework for $\mathcal{N}=1$ supersymmetric theories with $\spz$ modular symmetries was constructed in the 90's by~\cite{LopesCardoso:1994is,LopesCardoso:1996nc,Stieberger:1998yi,Nilles:1997vk} and revived recently in~\cite{Ding:2020zxw,Baur:2020yjl,Nilles:2021glx} in the context of flavor symmetries (see also \cite{Ding:2021iqp,Ding:2023htn,Ding:2024xhz,Funakoshi:2024yxg,Jiang:2025qbi}). Most importantly, the familiar set of classical modular forms is no longer appropriate, and one must employ \emph{Siegel modular forms of degree 2}. 
 
In this work, we consider for the first time moduli stabilization in the context of $\spz$ modular symmetry. In \cref{sec:tororbifSiegel}, after reviewing the $TUZ$-setup in a modern fashion, we will move to the full $STUZ$ model which includes as well the dilaton contained in the chiral multiplet $S$ as the one further scalar field universally present in every string theory. The dilaton in turn sets the string coupling and within heterotic string theory also controls the gauge coupling of the heterotic gauge sector. We then discuss how to write the nonperturbative superpotential, including the appearance of moduli-dependent threshold corrections to the effective gauge coupling~\cite{Dixon:1990pc,Mayr:1995rx,Stieberger:1998yi}. Moreover, we construct a modular-invariant function to parametrize nonperturbative corrections to the superpotential in the geometric moduli.  Furthermore, we  introduce nonperturbative corrections to the K\"ahler potential arising from purely stringy instanton contributions, as in~\cite{Shenker:1990,Silverstein:1996xp,Gaillard:2007jr,Kaufman:2013pya,Leedom:2022zdm,Green:2016tfs,Alvarez-Garcia:2024vnr}. In~\cref{sec:scalarpot}, we explicitly compute the F-term scalar potential and in~\cref{sec:nogos} we prove a number of theorems showing that the elliptic fixed points of $\spz$ are extrema of the potential. Finally, in~\cref{sec:minima}, we focus on characterizing both the elliptic fixed points as well as points inside the moduli space: we discuss the conditions for them to be minima of the potential and how one could potentially realize a vacuum with positive energy. To make the discussion explicit and evaluate numerically our findings, we created the Mathematica package~\hyperlink{https://github.com/nrighi/smfs.git}{SMFs.m} available at~\cite{righi:2026SMF}. 

As for mathematical tools, we will make use of the theory of Siegel modular forms (SMFs), reviewed in~\cref{app:SMFs};  we will discuss the modular invariant functions built from the ring, the geometry of the Siegel upper-half space and its zeroes and poles,  and how they affect the EFT. Moreover, we prove new mathematical results and introduce the notion of \emph{bipartite} Siegel modular forms.

\section{Toroidal orbifolds with Siegel modular symmetries}
\label{sec:tororbifSiegel}
Since the $T^6$ compactification of heterotic string theory results in an EFT with too many preserved supercharges, we want to consider toroidal orbifolds, i.e.~compactifications on a $6$D manifold $X_6$ of the form $
    X_6 = T^6/\Gamma$ ,
where $\Gamma$ is a finite cyclic group of the form $\mathbb{Z}_M$ or $\mathbb{Z}_M\times \mathbb{Z}_N$. The T-duality group is then a subgroup that preserves the action of $\Gamma$. In this work, we focus on $X_6$ with the further restricted form $T^6/\Gamma = T^4/\Gamma \times T^2$ or $T^6/\Gamma = K3/\Gamma \times T^2$. Such compactifications preserve an $\mathcal{N}=2$ supersymmetric subsector on the torus not touched by the orbifolding action. The importance of this is that the nonperturbative superpotential induced from gaugino condensation will depend non-trivially on the moduli of the untouched $T^2$. Given the discussion in the introduction, this non-trivial dependence must take the form of a particular class of modular forms. This situation has been studied in various contexts in the case of vanishing Wilson line moduli, where the relevant T-duality group is $\slz_T \times \slz_U \rtimes \mathbb{Z}_2$. In what follows, we  study how the insertion of a Wilson line changes the story.

\subsection{4D $\mathcal{N}=1$ heterotic supergravity}
Let us consider the heterotic geometric moduli associated with a $T^2$ factor in the compactification manifold $X_6$. We have a K\"{a}hler modulus $T$, a complex structure modulus $U$, and Wilson line $Z$. In terms of the metric $G_{ab}$, 2-form $B_2$, and gauge field $A$, the complex moduli are defined as~\cite{Mayr:1995rx,LopesCardoso:1994is}
\begin{align}
    T &= B_{12}+i\sqrt{\text{det}(G)} +A_1 (-A_2+U A_1)\coma\\
    U &= \frac{1}{G_{11}}\bigg(G_{12}+i\sqrt{\text{det}(G)}\bigg)\coma\\
    Z &= -A_2 + U A_1\fstop
\end{align}
It is convenient to
combine these fields into a $2\times 2$ matrix belonging to the Siegel upper half-space of genus two $\mathbb{H}_2$ as
\begin{equation}
M=
\begin{pmatrix}
    T & Z\\
    Z & U
\end{pmatrix}\,,\quad \im M > 0\,,
\end{equation}
such that 
\begin{equation}
    Y:=-\frac{1}{4}\text{det}(M-M^{\dag})=-\frac{1}{4}\left((T-\bar{T})(U-\bar{U})-(Z-\bar{Z})^2\right)\!=\im{T}\,\im{U}-\im{Z}^2
\end{equation}
is positive definite. Hence, the K\"ahler potential for the fields $T,\,U,\,Z$ is
\begin{equation}
    K^{(0)}= -\gamma\ln Y\,, \quad \gamma\in\mathbb{R}\,.
\label{eq:siegel_kahler_pot}
\end{equation}
Note that, for vanishing Wilson line, the K\"ahler potential reduces to the sum of $K(T,\bar T)$ and $K(U,\bar U)$. We will frequently make use of the  $Z\rightarrow 0$ limit as a consistency check with known results from the $\slz$-symmetric results.

Under an $\spz$ transformation
the K\"ahler potential undergoes a K\"ahler transformation
\begin{equation}\label{eq:Ktransformation}
\begin{split}
    K^{(0)} &\rightarrow -\gamma\ln\left[\text{det}\left((CM^\dag +D)^{-1}(M-M^\dag)(CM +D)^{-1}\right)\right]\\
        &\rightarrow K^{(0)} +\gamma\ln\left[\text{det}(CM^\dag +D)\right] +\gamma\ln\left[\text{det}(CM +D)\right]\,.
\end{split}
\end{equation}
This allows us to anticipate the required automorphic properties of the nonperturbative superpotential. Since the total F-term scalar potential 
\begin{equation}\label{eq:Fpotential}
    V=e^{G} \left(G_i G^{i\bar j}G_{\bar j}-3\right)= e^{K}\left(K^{i\bar{j}}\mathcal{D}_iW\mathcal{D}_{\bar{j}}\overline{W}-3|W|^2\right)
\end{equation}
must be a modular invariant function, so must be the auxiliary supergravity field $G\equiv K+\ln|W|^2$. Namely, $W$ must transform to offset the K\"{a}hler transformation in \cref{eq:Ktransformation} and restore modular invariance. In particular, we need
\begin{equation}
        W \rightarrow e^{i\delta}\text{det}(CM+D)^{-\gamma}\,W\,,
\end{equation}
where $\delta$ is a phase that becomes relevant only in the case of a non-trivial multiplier system. The above transformation property provides a constraint on the form and the possible component terms which the superpotential may contain. We then need physical arguments and sources to determine the actual contributions to the superpotential. It turns out that gaugino condensation in the dilaton sector exactly reproduces the modular behavior we are looking for, once moduli-dependent threshold corrections are included. This is the subject of the next section. 

Before moving on, let us remark that the presence of a non-trivial Wilson line forces us to move from the familiar set of elliptic modular forms to Siegel modular forms (SMFs) of degree 2. We introduce here the basics which will be useful to understand the EFT; We defer the reader to \cref{app:SMFs} for details on SMFs. 

First, the notion of the hyperbolic upper-half plane of genus $1$ generalizes to the \textit{Siegel upper-half space} of genus $2$:
$\displaystyle \mathbb{H}_2 := \left \{M \in \text{Mat}(2\times 2,\C) \middle\vert M^T = M, \ \Im(M) >0 \right \}$. This is an open subset in the space of 
$2\times 2$ complex symmetric matrices, hence it is a complex manifold of complex dimension 
$3$. The Siegel modular group of degree 2 is $\spz$. Then, the symplectic group $\spz$ acts on the period matrix $M$ as 
\env{align}{
M \rightarrow \Gamma_2 M = (AM + B)(CM + D)
^{-1}, \ \ \ \forall\; \Gamma_2 = \env{pmatrix}{A & B \\ C & D} \in \spz\,.}
 A holomorphic function $F_k: \h_2 \rightarrow \C$ that transforms  under $\spz$ as
\env{align}{F_k(M) \rightarrow F_k(\Gamma_2 M) = (CM + D)^k F_k(M)\,, \quad \Gamma_2 \in \spz\,,}
is a holomorphic \textit{Siegel modular form of degree $2$ and weight $k$}. An elliptic modular form can be thought of as a Siegel modular form of degree $1$ by virtue of the isomorphism that $\spg \cong SO(g+1,g)$ and that $SO(2,1) \cong \slz$. To conclude this quick introduction, we remark a classical theorem by Igusa \cite{Igusa1962}, stating that the ring of algebraically independent SMFs of degree 2  over $\mathbb{C}$ is generated by the Siegel-Eisenstein forms $\sef$, $\ses$, and the cusp forms $\chi_{10}$ and $\chi_{12}$.

\subsection{Moduli-dependent threshold corrections}
Before discussing the superpotential, let us include in the K\"ahler potential the dependence on the dilaton $S$ as follows
\begin{equation}\label{eq:kpot}
    K=-\ln\left(S+\bar{S}+\frac{1}{8\pi^2}K^{(1)}(M)+k_{np}(S)\right)+K^{(0)}(M)\,,
\end{equation}
where $K^{(1)}$ denotes the moduli-dependent one-loop correction to the tree-level part $K^{(0)}$ of the K\"ahler potential. It contains the Green-Schwarz  anomaly cancellation corrections $K^{(1)}_{\mc{N}=1}$, which refer to twisted planes and therefore involve only the moduli of these planes, and Green-Schwarz corrections $K^{(1)}_{\mc{N}=2}$ from the untwisted planes, which can be computed from the $\mc{N}=2$ prepotential. We also include nonperturbative corrections to the dilaton $k_{np}(S)$, which will be relevant in~\cref{sec:minima}. Below, we will refer to the first parenthesis in \cref{eq:kpot} as $k(S,\bar S)$.

In orbifold compactifications with untwisted planes, threshold corrections are given in terms of automorphic functions of the underlying target space duality group \emph{with the appropriate singularity structure}. In 4D $\mc{N} = 2 $ heterotic string compactifications,  threshold corrections can be written in terms of the supersymmetric index, which is related to the perturbative heterotic $\mc{N} = 2$ prepotential. The Abelian gauge threshold functions are related to the second derivatives of the one-loop prepotential. At the loci of enhanced non-Abelian gauge symmetries, some of the Abelian gauge couplings will exhibit logarithmic singularities due to the additional massless states entering in the loop. We now explain more carefully the above statements.

The evolution equation of the effective coupling below the string scale $M_s$ is given by~\cite{Kaplunovsky:1995jw,Nilles:1997vk}
\begin{equation}\label{eq:running}
    \frac{1}{g_{a}^{2}(\Lambda^2)}=k_a \left(\Im S+\frac{\Delta_{univ}}{16\pi^2}\right)+\frac{b_a}{16\pi^2}\ln\frac{M_{s}^2}{\Lambda^2}+\frac{1}{16\pi^2}\Delta_a\,,
\end{equation}
where $k_a$ is the level of the Kac-Moody algebra associated to the gauge group labelled by $a$, $b_a$ are related to the one-loop $\beta$-function coefficients of the  $\mc{N}=2$ subsector,  the group-dependent threshold corrections are
contained in $\Delta_a$, and there is also a universal correction $\Delta_{univ}$ which contains $K^{(1)}$ and can be absorbed
in the holomorphic redefinition of the dilaton. Both $\Delta_a$ and $\Delta_{univ}$ depend on the untwisted moduli and include contribution of
massive string and Kaluza-Klein states. In particular, for any perturbative heterotic compactification we have~\cite{Kaplunovsky:1987rp,Dixon:1990pc,Antoniadis:1991fh,Antoniadis:1992sa,Antoniadis:1992rq,Kaplunovsky:1995jw}
\begin{equation}\label{eq:intthresh}
    \Delta_a =\int_\mc{F} \frac{d^2\tau}{\tau_2} \Big[\mathcal{B}_a(\tau,\bar{\tau})-b_a\Big]\,,
\end{equation} 
where $\mc{F}$ is the fundamental domain for the worldsheet torus parameter $\tau$. $\mathcal{B}_a$ is given by the trace over  the Ramond sector of the worldsheet CFT
\begin{equation}
    \mathcal{B}_a(\tau,\bar\tau)=-\frac{1}{|\eta(\tau)|^2}\text{Tr}_{RR}\left\{e^{J_0}(-1)^{\bar F}(-1)^F q^{L_0-c/24}\bar{q}^{\bar{L}_0-\bar{c}/24}\left(Q_a^2-\frac{k_a}{8 \pi \tau_2}\right)\right\}\,,
\end{equation}
where $Q_a$ is a generator of the gauge group and $J_0$ is the U$(1)$ current of the left $\mc{N}=2$ Kac-Moody algebra. This quantity can be written in terms of the supersymmetric index \cite{Antoniadis:1992rq,Harvey:1995fq}.
Since the supersymmetric index is a weight 0 index 1 weak Jacobi form, $\Delta_a$ can be entirely written in  terms of appropriate modular forms. Therefore, it is $\Delta_a$ that gives rise to the automorphic properties encoding the  corrections to the gauge coupling, which depend \emph{solely} on the geometry of the internal manifold. Moreover, $\Delta_a$ contains the contribution of all the massive modes, exemplifying that heterotic orbifold models form a beautiful example of string vacua for which the entire massive spectrum is known exactly. Additionally, $\Delta_a$ can contain a finite number of states which become massless at some critical point $P_c$. There, it develops a logarithmic singularity of the form $n\log(M_i- P_c)$, where $n$ accounts for the degeneracy of states becoming massless and is proportional to the order of $P_c$ under modular transformation~\cite{LopesCardoso:1994ik}. This was already noticed in classic literature~\cite{Binetruy:1996uv,deCarlos:1992kox} about heterotic gaugino condensation: integrate out the massive quark fields appearing in the condensate one-loop determinant, which have their masses set by moduli singlet scalars. The result of integrating out the quarks is to render the one-loop determinant a power-law function of the modulus setting the  quark mass. This modulus dependence of the one-loop determinant in turn can be written as a singular logarithmic contribution to the one-loop threshold corrections of the gauge kinetic function in heterotic string theory.

For 4D orbifold models with $\slz$ modular symmetries, it was shown~\cite{Dixon:1990pc,Harvey:1995fq} that $\mc{B}_a$ contains combinations of modular forms such that performing the integral in \cref{eq:intthresh} gives one term proportional to $T$ and, most importantly, one term proportional to
\begin{equation}
    e^{\pi i T/12}\prod_{n>0} \left(1-e^{2\pi n i T}\right) \equiv \eta(T)\,,
\end{equation}
generated by worldsheet instanton contributions. Hence, for a single $T^2$ one gets
\begin{equation}
    \Delta_a=- b_a\ln\left[(-iT+i\bar T)|\eta(T)|^4|\right]\,.
\end{equation}
To completely cancel the T-duality anomaly, it is crucial to write the gauge kinetic function $f_a$ including the above threshold corrections. Finally, note that the effective coupling does not receive higher-loop perturbative corrections, as they would violate the  shift invariance of the dilaton:~\cref{eq:running} is exact in perturbation theory.

An argument by close analogy was given for theories with Wilson lines, which are then symmetric under the $\spz$ modular group. First,~\cite{Mayr:1995rx} extracted $\Delta_a$ by looking at the perturbative duality symmetry $SO(3, 2, \mathbb{Z})$ and at  the singularity structure in the moduli space. Mayr and Stieberger identified two relevant forms of behavior for the threshold corrections.  In the first case, no (under  the considered gauge group) charged particles become massless for $ Z\rightarrow 0$; the form of these thresholds is given by
\begin{equation}\label{eq:deltaII}
    \Delta_a^{II}\sim-\frac{1}{12}\ln(Y^{12}|\chi_{12}|^2)\,.
\end{equation}
In the second case, some particles charged under the gauge group under consideration become massless for $Z \rightarrow 0$, and the threshold corrections are given by
\begin{equation}\label{eq:deltaI}
     \Delta_a^I\sim-\frac{1}{10}\ln(Y^{10}|\chi_{10}|^2)\,.
\end{equation}
Finally, in~\cite{Stieberger:1998yi} Stieberger also  considers the case of two gauge groups, where one gets enhanced for $ Z\rightarrow 0$, say $a$, and the other, labelled by $a^\prime$,  does not:
\begin{equation}\label{eq:deltaaap}
     \Delta_{a-a'}\sim-\frac{1}{12}\ln(Y^{12}|\chi_{12}|^2)+ \left(1-\frac{b_a(Z=0)}{b_{a^\prime}(Z\neq 0)}\right)\ln\left|\frac{\chi_{10}^{1/2}}{\chi_{12}^{5/12}}\right|^2\,.
\end{equation}
$\Delta_a^I$ was first explicitly derived by computing one-loop string diagrams on the world-sheet on a standard embedding of the heterotic string compactified on K3$ \times T^2$ with a Wilson line turned on~\cite{Kawai:1995hy}, where one is helped by the presence of a K3 surface, where the elliptic genus, and hence the supersymmetric index, is well know.  On general grounds,~\cref{eq:intthresh} can be recasted into three integrals over different orbits of $\spz$ (just like in the $\slz$ case) such that one finds~\cite{Kawai:1995hy,Harvey:1995fq,Neumann:1996is}
\begin{equation}
    \frac{\Delta_a}{b_a}=-\ln(Y^{c(0,0)/2})-\ln\Big|e^{AT+CU+BZ}\prod_{(s,t,u)>0}\left(1-e^{sT+tU+uZ}\right)^{c(st,u)}\Big|^2\,.
\end{equation}
For details, we refer the reader to \cref{app:SMFs}. Gritsenko and Nikulin~\cite{gritsenko1995siegel} proved that the above relation contains the exponential lift of a weak Jacobi form~\cite{borcherds1995automorphic}. The result of the exponential lift is a genus-$2$ Siegel modular form for a (possibly congruence) subgroup of $\spz$ with weight $c(0,0)/2$, $c(0,0)$ being the constant term of the weak Jacobi form of weight $0$ and index $1$ that has been exponentiated. Note that the factor $Y$ transforms with weight $-1$, which fixes the weight of the second piece to be $c(0,0)/2$.

Applied to  K3$ \times T^2$, we have that $c(0,0)=20$, 
and 
\begin{equation}
e^{T+U+Z}\prod_{(s,t,u)>0}\left(1-e^{sT+tU+uZ}\right)^{c(st,u)}\equiv\chi_{10}(M)\,,
\end{equation}
such that the threshold correction is a modular function of weight $0$ and  one reproduces~\cref{eq:deltaI}.

However, it seems that $\Delta_a^{II}$ cannot be derived from~\cref{eq:intthresh}. Indeed,~\cite{Stieberger:1998yi} showed that differences of gauge threshold corrections $\Delta=(\Delta_a-\Delta_{a'})/(b_a-b_{a'})$ of any K3 orbifold are always of the form $\Delta_a^I$ in \cref{eq:deltaI} or $\Delta_{a-a'}$ in \cref{eq:deltaaap} (note that all the model dependence goes into the prefactor $b_a$). Those can both be derived from the exponential lift of a weight zero weak Jacobi form. The problem about $\Delta_a^{II}$ lies in the properties of $\chi_{12}$, namely its zeroes and poles structure is not given by a divisor with a simple form~\cite{Stieberger:1998yi}.\footnote{By a simple divisor we mean a union of dimension-2 loci where particular symmetries are enhanced (the higher-dimensional analogues of the fixed points). Being defined on simple divisors is a fingerprints of exponential lifts. However, the divisor of $\chi_{12}$ seems to be of a different, non-trivial type.}

Let us now recap the physical argument for appearance of both $\chi_{10}$ and $\chi_{12}$ in the one-loop threshold corrections: Some sectors of a given compactification show a finite number of states becoming massless in the limit of vanishing Wilson line $Z\to 0$, while other sectors do not have such light states at the locus $Z=0$. The threshold corrections must be parametrized by logs  of $\spz$-invariant cusp forms. Hence, for the sectors with light states at $Z=0$ the threshold contribution must take the form $\ln(\text{cusp-form}(Z))$ with a cusp form vanishing as a power of $Z$ at $Z=0$. Conversely, for sectors with no light states at the zero Wilson line locus, their contribution to the threshold corrections should look like $\ln(\text{cusp-form})$ with a non-vanishing cusp form at $Z=0$. The only two $\spz$-invariant cusp forms available are $\chi_{10}$ and $\chi_{12}$, respectively. Hence, by these physics reasons we expect the threshold correction for a compactification to have contributions of both types, one of each type, which qualitatively resemble \eqref{eq:deltaaap}.

For our purposes, the moduli-dependent threshold corrections allows us to write the superpotential $W$ as
\begin{equation}\label{eq:superpot}
    W(S,M)=\frac{\Omega(S)}{\chi_\beta^{\alpha}(M)}\equiv\frac{\Omega(S)}{\chi_\gamma(M)}\,,
\end{equation}
where the piece proportional to $Y$ in the threshold correction has been absorbed in the definition of $S$. The function $\Omega(S)$ encodes only the $S$-dependence, as e.g.~$\Omega(S)\sim h+ e^{-S/b_a}$ for a single condensate and $h$ an additive constant to describe $H_3$-flux effects. In the denominator we have included the contribution from the threshold corrections as weight-$\beta$ cusp-forms to the power $\alpha$; for $\gamma=\alpha\beta$, modular invariance of the supergravity field $G$ is ensured. For this reason we introduce the shorthand notation $\chi_\beta^{\alpha}\equiv\chi_\gamma$. Below we keep the discussion general, but it will be important to pick the right threshold correction when computing an explicit example.

\subsection{Nonperturbative corrections from geometric moduli}
Another source of contribution to the superpotential comes from nonperturbative effects in the $M=T,\,U,\,Z$ moduli sector. This should be encoded in a modular function $\mc H(M)$, since threshold corrections are already saturating the modular weights for the superpotential. For the sake of full generality, such function better be the most general modular function one can write. Moreover, it is expected to be the genus-$2$ generalization of the following modular function: 
\begin{align}
\label{eq:H}
    H(T) = (j(T)-1728)^{m/2}\,j(T)^{n/3}\,\mathcal P(j(T)) =\left ( \frac{E_6(T)}{\eta (T)^{12}}\right)^m \left( \frac{E_4(T)}{\eta(T)^{8}}\right)^n \mathcal P(j(T))\,,
\end{align}
where $\mathcal P(j\t)$ is a polynomial of the $j$-function of finite order and we used the fact that, at the elliptic fixed points, $j(i)=1728$ and $j(\rho\equiv e^{2\pi i/3})=0$, and\begin{equation}\label{eq:reljE}
    j(T)=\frac{E_4(T)^3 }{\eta(T)^{24}}\,,\qquad j(T)-j(i)=\frac{E_6(T)^2 }{\eta(T)^{24}}\,.
\end{equation}
Note that $H(T)$ have the schematic form of $\exp(\pm2\pi iT)$: that is why the function
$H(T) $ can be thought of parametrizing nonperturbative effects in the K\"ahler modulus. Note also that $m$ and $n$ are divided by their respective stabilizer orders.\footnote{On the quotient $\slz\backslash \mathbb H^*$, a holomorphic,  invariant function must be analytic in the orbifold local coordinates
$u=(\tau-i)^2$, $v=(\tau-\rho)^3$,
because the identifications act as $\tau-i\mapsto-(\tau-i)$ at $i$, and $\tau-\rho\mapsto \zeta_3(\tau-\rho)$ at $\rho$ ($\zeta_3$ being the multiplier). 
So a genuine modular function cannot have an odd order zero in $(\tau-i)$, nor an order that is not a multiple of $3$ in $(\tau-\rho)$.}

The expression in~\cref{eq:H} is the most generic regular modular function on the  $\slz$ upper-half plane (with a potentially non-trivial multiplier system)~\cite{Rademacher,Lehner:1964}. This follows from the classical result that every rational function of the $j$-invariant function is a modular function of $\slz$. Conversely, every modular function of $\slz$ of weight 0 is a rational function of the $j$-invariant. In the following, we propose a possible genus-$2$ generalization of the modular function $H(T)$.

First, in the genus-$2$ setup, there is not a single $j$-invariant as for genus 1, where a single invariant function is enough to  classify generic elliptic curves. In particular, let $\Gamma$ be any subgroup of $\mathbb{H}_g$. 
Then there exist $s = g(g + 1)/2$  analytically and algebraically independent automorphic functions with respect to $\Gamma$ which are quotients of automorphic forms \cite{Klingen_1990}[Prop.~2]. The genus-$2$ moduli space $\mc A_2$ has complex dimension 3, so one needs $s=3$ algebraically independent modular functions to parametrize it. The genus-$2$ analogue of the $j$-invariant is then a triplet of modular invariant functions that generates the field of Siegel modular functions for $\spz$.

While all the  results stated below can be generalized to degree $g$, let us specify the discussion to the case of interest. We follow~\cite{Klingen_1990}.

Let $Q_2$ be the following  algebraic function field of meromorphic
functions on $\mathbb{H}_2$:
\begin{equation}
    Q_2=\left\{ f=\frac{g}{h}\,\big|\,g,h \text{ modular form of equal weight, } h\neq 0\right\}\,.
\end{equation}
One can prove that there exist $3$ algebraically independent
modular functions in $Q_2$. Hence, any modular function $f$ in $Q_2$ can be represented \emph{globally} as the quotient of isobaric polynomials in $\sef,\, \ses,\, \mathcal{E}_{10}$ and $\mathcal{E}_{12}$ of equal weight. In particular, the field $Q_2$ is the rational function field generated over $\mathbb{C}$ by the algebraically independent functions~\cite{freitag2013siegelsche}[Thm.~1.7]
\begin{equation}\label{eq:deg2j}
    \frac{\sef^5}{\mathcal{E}_{10}^2} \,,\quad
    \frac{\ses^2}{\mathcal{E}_{12}} \,,\quad 
    \frac{\sef \ses}{\mathcal{E}_{10}}\,  .
\end{equation}
 This statement generalizes the well-known result on the generation of the field of elliptic modular functions by the modular invariant $j(\tau)$, therefore we will denote the functions in~\cref{eq:deg2j} as $\text{J}_1(M),\, \text{J}_2(M),\,\text{J}_3(M)$, respectively.

We pause here to examine whether the proposal above is compatible with the geometry of the moduli space. Since the quotient $\mathbb H_g/\Gamma_g$ is non-compact, a priori a $\Gamma_g$-invariant meromorphic function on $\mathbb H_g$ need not extend to a meromorphic function on the compactification: near the cusps it could develop non-meromorphic behavior, for instance an essential singularity, and therefore fail to belong to the classical field of modular functions.

For $g=1$, an $\slz$-invariant meromorphic function on the upper half-plane $\mathbb H_1$ is a classical modular function only if one also imposes meromorphicity at the cusp, i.e.~at $q=e^{2\pi i \tau}=0$ as $\tau\to i\infty$. Without this extra condition, one allows infinitely many $\slz$-invariant functions with essential singularities at $q=0$. These are invariant functions on  $\mathbb{H}_1$ but they do not define modular functions in the usual algebraic sense. As a consequence, the resulting field looses algebraic control.

For genera $g>1$, however, this extra condition is not needed. Baily proved in~\cite{Baily}[Thm.~5] that the meromorphicity conditions at the cusps is guaranteed by the non-trivial geometry. The reason is that the case $g=1$ is geometrically exceptional. The modular curve $\mathcal A_1=\mathbb H_1/\Gamma_1$ has a boundary of codimension 1, so the cusp behaves like a genuine puncture on a  curve, and poles or more general singularities may occur there. By contrast, for $g>1$ the boundary of the compactification  $\mathcal A_g=\mathbb H_g/\Gamma_g$ has codimension at least 2. The boundary is therefore 
much smaller from the point of view of several complex variables: it is too small to support independent pole data for holomorphic automorphic objects. Equivalently, the extra complex directions available near the cusps rigidify the behavior of invariant functions and prevent the kind of arbitrary singular behavior that may occur in genus 1. This is the geometric reason why, in higher genus, no separate growth condition at the cusps is required: $\Gamma_g$-invariance together with meromorphicity on $\mathbb H_g$ already forces the correct boundary behavior. The principle behind these statements is the \emph{pseudoconcavity} property of the moduli space $\mc A_g$ for $g>1$~\cite{Andreotti61,Klingen_1990}: even if $\mc A_g$ is non-compact, its geometry at infinity is sufficiently concave to impose strong finiteness properties on meromorphic functions.

Given these premises, we are ready to formulate the following
\begin{definition}
Given the 6 elliptic fixed points $\sigma_i$, $i=1,\dots,6$ with stabilizer subgroups Stab$(\sigma_i)$ of order $\ord(\sigma_i)$, we propose a candidate for a suitably general modular invariant function in the genus-$2$ Siegel upper-half space, which is given by
\begin{equation}
\mc{H}(M)=
\prod_{\substack{1\le i\le 6}}
\left(\text{J}_1-\text{J}_1|_{\sigma_i}\right)^{\frac{m_i}{\ord(\sigma_i)}}
\left(\text{J}_2-\text{J}_2|_{\sigma_i}\right)^{\frac{n_i}{\ord(\sigma_i)}}
\left(\text{J}_3-\text{J}_3|_{\sigma_i}\right)^{\frac{\ell_i}{\ord(\sigma_i)}}
\mathcal{P}(\text{J}_1,\text{J}_2,\text{J}_3)\,.
\end{equation}
Suitably general here means that the choice of $\mc{H}$ avoids the appearance of inverse powers of the Siegel cusp forms $\chi_{10}$ and $\chi_{12}$ such that $\mc{H}$ has no poles  at the fix points.
\end{definition}
\noindent For genus 1 elliptic modular forms, in defining the most general modular-invariant function one has to worry about the presence of multiplier systems. However, as reviewed in~\cite{Klingen_1990}, it was proven~\cite{Christian:1962uc} that for genera $n>1$ only integral weights are possible, implying that for $n = 2$ only one non-trivial character
exists, and none for $n > 2$. In~\cite{maass1964multiplikatorsysteme} (see also~\cite{freitag2024multiplier}) it is furthermore shown that, for the case $n=2$ of interest here, the two possible multiplier systems come down to either $+1$ or $-1$.

We can evaluate $\mc H(M)$ numerically at the fixed points and rearrange it as follows:
\begin{equation}
        \mc{H}(M)\!= \!\left(\!\frac{\sef^5}{\mathcal{E}_{10}^2}\!-\!14.1687\!\right)^{\!\!\frac{m_5}{24}} \!\!\left(\!\frac{\ses^2}{\mc{E}_{12}}\!-\!14.7414\!\right)^{\!\!\frac{n_2}{48}} \!\!\left(\!\frac{\sef^5}{\mc{E}_{10}^2}\!\right)^{\!\!\mu} \!\left(\!\frac{\ses^2}{\mc{E}_{12}}\!\right)^{\!\!\nu}\! \left(\!\frac{\sef \ses}{\mc{E}_{10}}\!\right)^{\!\!\lambda}\mathcal{P}(\text{J}_1,\text{J}_2,\text{J}_3)\,,
\end{equation}
where we have defined
\begin{equation}
     \begin{split}
       &  \mu=\frac{m_1}{10}+\frac{1}{96} (2 m_2+3 m_3+\frac{4m_4}{3} +4 m_6)\,,\quad\nu=\frac{n_1}{10}+\frac{n_3}{32}+\frac{n_4}{72}+\frac{n_5+n_6}{24} \,,\\
       & \lambda=\frac{\ell_1}{10}+\frac{1}{96} \left(2 \ell_2+3 \ell_3+\frac{4\ell_3}{3}+4 (\ell_5+\ell_6)\right)\,.
    \end{split}
\end{equation}
Finally, we can formulate the superpotential as
\begin{equation}\label{eq:finalsuperpot}
W(S,M)=\frac{\Omega(S)\mc{H}(M)}{\chi_\gamma(M)}\,,
\end{equation}
where $\chi_\gamma\equiv\chi_\beta^{\alpha}$ is a weight-$\gamma$ cusp form. We will use \cref{eq:finalsuperpot} in the rest of this work, choosing $\gamma$ appropriately.

\section{Scalar potential}
\label{sec:scalarpot}
We now have all the ingredients to compute the F-term scalar potential as in~\cref{eq:Fpotential} and thus investigate its properties. To be concrete, we will consider a model with the four moduli $T,U,Z$ and $S$ with K\"{a}hler potential
\begin{equation}
    K = -\gamma\ln(Y) + k(S,\bar{S})\,,
\end{equation}
and the superpotential in~\cref{eq:finalsuperpot}. To construct the scalar potential, we first examine the F-terms 
$F_I=\partial_I W+W\partial_I K$ for $I=S,\,T,\,U,\,Z$,  where from now on we denote for fields $\Phi_I$ derivatives $\partial/\partial\Phi_I$ as $\partial_I$. The F-term for the dilaton reads
\begin{equation}
    \begin{aligned}
    F_S&=\frac{\sew{H}}{\chi_\gamma}\left(\partial_S\Omega+(\partial_Sk)\Omega\right)=W\left(\partial_S\ln\Omega+\partial_Sk\right)\equiv \frac{\mathcal{H}}{\chi_\gamma}\widetilde{F}_S\,,
    \end{aligned}
\label{eq:dil_F}
\end{equation}
while those for the geometric fields $M_{ij}$ read
\begin{equation}
    \begin{aligned}
    F_{M_{ij}}
    &=W\bigg( \partial_{M_{ij}}\log(\mathcal{H})- \frac{\nabla_{\!M_{ij}}\chi_\gamma}{\chi_\gamma}\bigg)= \frac{\Omega(S)}{\chi_\gamma} \widetilde{F}_{M_{ij}}\,,
    \end{aligned}
\label{eq:geo_F}
\end{equation}
where we have used the definition of the Siegel covariant derivative from the Appendix. From the discussion there, we see that the re-scaled F-terms $\widetilde{F}_{M_{ij}}$ defined in the last equality of~\cref{eq:geo_F} transform as bipartite Siegel modular forms. These same re-scaled F-terms will be used in the no-go theorems in the subsequent section. 

Since the K\"{a}hler metric is block-diagonal between the geometric and dilaton moduli, we can write the scalar potential~\cref{eq:Fpotential} as
\begin{equation}
    \begin{aligned}
        V &= \frac{e^k}{Y^\gamma}\bigg(F_S\bar{F}_{\bar{S}}K^{S\bar{S}} + K^{a\bar{b}}F_a\bar{F}_{\bar{b}} - 3W\bar{W}\bigg)\\
        &= \frac{e^k}{Y^\gamma|\chi_\gamma|^2} \bigg( \big| \mathcal{H}\big|^2\left(K^{S\bar{S}}\widetilde{F}_S\widetilde{\bar{F}}_{\bar{S}}-3\right)  + \big| \Omega(S)\big|^2 K^{a\bar{b}}\widetilde{F}_a\widetilde{\bar{F}_{\bar{b}}}\bigg)\\
        &= e^k \mc{Z}(M) \lvert\Omega(S)\rvert^2\bigg((A(S,\bar{S})-3)\lvert\mathcal{H}(M)\rvert^2 +\hat{V}(M,\bar{M})\bigg)\,,
    \end{aligned}
\label{eq:4mod_pot}
\end{equation}
where the subscripts take values $a,b = T,U,Z$ and we have defined~\cite{Cvetic:1991qm,Leedom:2022zdm} 
\begin{equation}\label{eq:potblocks}
    \begin{split}
        \mc{Z}(M) &:= \frac{1}{Y^\gamma \rvert\chi_\gamma\lvert^2}\,,\\
        A(S,\bar{S}) &:= K^{S\bar{S}}\big| \partial_S\log(\Omega(S)) +k_s\big|^2\,,\\
        \hat{V}(M,\bar{M}) &:= K^{a\bar{b}}\widetilde{F}_a\widetilde{\bar{F}}_{\bar{b}}\,.
    \end{split}
\end{equation}
As an explicit check, we can easily see how the above structure reduces back to the familiar $\slz$ result by first taking the $U\to\infty$ limit and only  then switching off the Wilson line.  Note that the order of these limits is important. Indeed, at finite Wilson line, the degeneration at $U\to \infty$ is not directly to an ordinary elliptic modular form, but rather to the Fourier-Jacobi expansion of a weight-$k$ genus-2 form $F_k$:
\begin{equation}
    F_k(M)=\sum_{m\geq 0}\phi_{k,m}(T,Z) e^{2\pi i m U} \,,
\end{equation}
where each coefficient $\phi_{k,m}(T,Z)$ is a Jacobi  form of weight $k$ and index $m$.
In the intermediate regime the physics is described by Jacobi forms, whose variable keeps track of the Wilson line dependence. Only after the Jacobi form has been isolated, the further limit of $Z\rightarrow0$ projects the system onto the purely classical modular sector, thereby recovering the standard $\slz$ result. 

\section{Zeroes of Siegel modular forms and no-go theorems}
\label{sec:nogos}
To study the extrema of the scalar potential, we prove some novel results on the zeros of  SMFs of degree 2. These will be useful to understand the behaviors of SMFs at the elliptic points.

The elliptic fixed points of $\spz$ were categorized in~\cite{Gottschling1961a,Gottschling1961b,Gottschling1967} and recently summarized in~\cite{Ding:2024xhz,Bashmakov:2022uek}. We define
\begin{equation}
    \zeta = e^{2\pi i/5}\,,\quad
    \eta = \frac{1}{3}(1+i\,2\sqrt{2})\,,\quad
    \rho = e^{2\pi i/3}\,.
\end{equation}
The elliptic fixed points $\sigma_i$ and their stabilizer subgroups Stab$(\sigma_i)\subset\spz$ of order $\ord(\sigma_i)$  with generators $h_a^{(i)}$ are listed in~\cref{tab:fixedpts}.
\begin{table}[t!]
\centering
\begin{tabular}{|c|c|c|c|c|c|c|c|}
\hline
$\sigma_i$ & Stab$(\sigma_i)$ & $\! $$\!$$\ord(\sigma_i)$$\!$$\!$$\!$& $\!$$\!$Generators$\!$$\!$&$\!$$\sef$$\!$&$\!$$\ses$$\!$&$\!$$\!$$\chi_{10}$$\!$$\!$&$\!$$\!$$\chi_{12}$$\!$$\!$\\
\hline\hline
$\!$$\!$$ \displaystyle
     \sigma_1=\begin{pmatrix}
                    \zeta  & \zeta + \zeta^{-2}\\   \zeta + \zeta^{-2} & - \zeta^{-1}
                \end{pmatrix}$$\!$$\!$ &
 $\mathbb{Z}_{10}$ & 10
   & $h^{(1)}_1$
   & 0 & 0 & -- & 0
   \\[1.5em]
  $\!$$\displaystyle
    \sigma_2= \begin{pmatrix}
                    \eta  & \frac{1}{2}(\eta \!-\!1)\\
                    \frac{1}{2}(\eta \!-\!1) & \eta
                \end{pmatrix}$$\!$$\!$$\!$ & $GL(2,3)$ & 48 & \shortstack[c]{$h^{(2)}_1,\, h^{(2)}_2$} 
                & -- & -- & -- & --
   \\[1.5em]
 $\displaystyle\sigma_3=\operatorname{diag}(i,i)$
                & $(\mathbb{Z}_{4}\times \mathbb{Z}_4)\rtimes \mathbb{Z}_2$ & 32
   &$\!$$\!$$\!$
  \shortstack[c]{
      $\!$$h^{(3)}_1,h^{(3)}_2,h^{(3)}_3$}$\!$$\!$
    & -- & 0 & 0 & --
   \\[1.5em]
$\displaystyle\sigma_4= \operatorname{diag}(\rho,\rho)$
               & $\!$$\!$$\mathbb{Z}_3\times(\mathbb{Z}_{6}\times \mathbb{Z}_2)\rtimes \mathbb{Z}_2$$\!$$\!$ &  72
   &$\!$$\!$ $\!$\shortstack[c]{$\!$$h^{(4)}_1,h^{(4)}_2,h^{(4)}_3$}$\!$$\!$
   & 0 & -- & 0 & --
   \\[1.5em]
$\displaystyle\sigma_5= \frac{i}{\sqrt{3}} \begin{pmatrix}
                    2  & 1\\
                    1 & 2
                \end{pmatrix} $
   & $(\mathbb{Z}_2\times\mathbb{Z}_6)\rtimes\mathbb{Z}_2$ & 24
   &$\!$$\!$\shortstack[c]{$\!$$h^{(5)}_1,h^{(5)}_2,h^{(5)}_3$}$\!$$\!$
   & -- & -- & -- & --
   \\[1.5em]
 $\displaystyle\sigma_6=\operatorname{diag}(\rho,i)$
   & $\mathbb{Z}_{12}\times \mathbb{Z}_2$ & 24
   & $h^{(6)}_1,\,h^{(6)}_2$
   & 0 & 0 & 0 & -- \\
\hline
\end{tabular}
\caption{Fixed points $\sigma_i$ with their stabilizer subgroups Stab$(\sigma_i)\subset\spz$ of order $\ord(\sigma_i)$, and values of the SMFs at those points (``$-$" stands for non-vanishing). The expressions for the generators can be found in~\cref{app:generators}.}
\label{tab:fixedpts}
\end{table}
In order to write the last four columns of~\cref{tab:fixedpts}, we proved the following
\begin{theorem} Let $F_k$ be a classical Siegel modular form of degree 2 and weight $k$. Furthermore, we let the multiplier system of $F_k$ be trivial for all elements of $\spz$. Let
\begin{equation}
h_a^{(i)}=\begin{pmatrix}
    A_a^{(i)}& B_a^{(i)}\\
    C_a^{(i)}& D_a^{(i)}
\label{eq:stabmatrix}
\end{pmatrix}
\end{equation}
be the $a$-th generators of Stab$(\sigma_i)\subset\spz$. Then at $\sigma_i$ we have that either det$(C_a^{(i)}\sigma_i+D_a^{(i)})^{k}=1$ or $F_k(\sigma_i)=0$.
In particular, we have
\begin{align}
    F_k (\sigma_1) &= 0 \qquad \text{unless} \qquad k= 0 \text{ mod } 5\nonumber\\
    F_k (\sigma_2) &= 0 \qquad \text{unless} \qquad k= 0 \text{ mod } 2\nonumber\\
    F_k (\sigma_3) &= 0 \qquad \text{unless} \qquad k= 0 \text{ mod } 4\nonumber\\
    F_k (\sigma_4) &= 0 \qquad \text{unless} \qquad k= 0 \text{ mod } 6\nonumber\\
    F_k (\sigma_5) &= 0 \qquad \text{unless} \qquad k= 0 \text{ mod } 2\nonumber\\
    F_k (\sigma_6) &= 0 \qquad \text{unless} \qquad k= 0 \text{ mod } 12\nonumber
\end{align}
\end{theorem}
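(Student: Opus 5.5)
The argument is the standard one for elliptic points, lifted to genus 2. Take a generator $h_a^{(i)}$ of Stab$(\sigma_i)$. Since $h_a^{(i)}\sigma_i=\sigma_i$, the transformation law of $F_k$ with trivial multiplier gives
\begin{equation*}
F_k(\sigma_i)=F_k(h_a^{(i)}\sigma_i)=\det\big(C_a^{(i)}\sigma_i+D_a^{(i)}\big)^{k}F_k(\sigma_i)\,,
\end{equation*}
so that $\big(1-\det(C_a^{(i)}\sigma_i+D_a^{(i)})^k\big)F_k(\sigma_i)=0$. This is the general dichotomy claimed. It suffices to use the generators: the map $h\mapsto\det(C\sigma_i+D)$ is a character of Stab$(\sigma_i)$ by the cocycle property of the automorphy factor. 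Hence $F_k(\sigma_i)\neq0$ forces $\det(C\sigma_i+D)^k=1$ for every stabilizer element, and conversely it is enough to test on the $h_a^{(i)}$.

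For the explicit congruences, I would compute for each $\sigma_i$ and each generator listed in the appendix the value $\lambda_a^{(i)}:=\det(C_a^{(i)}\sigma_i+D_a^{(i)})$. This is a root of unity, since the stabilizer is finite. Let $N_i$ be the order of the subgroup of $\C^\times$ generated by the $\lambda_a^{(i)}$, $a=1,2,\dots$. Then $F_k(\sigma_i)\neq0$ is possible only when $N_i\mid k$. The expected values are $N_1=5$, $N_2=2$, $N_3=4$, $N_4=6$, $N_5=2$, $N_6=12$.

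Several of these can be checked quickly.
\begin{itemize}
\item At $\sigma_3=\operatorname{diag}(i,i)$, the generator acting as $\tau\mapsto-1/\tau$ in one block gives $\det=i$, so $N_3=4$.
\item At $\sigma_4=\operatorname{diag}(\rho,\rho)$, the order-6 element $\tau\mapsto -1/(\tau+1)$ in one factor gives $\rho+1=-\rho^2$, a primitive 6th root of unity, so $N_4=6$.
\item At $\sigma_6=\operatorname{diag}(\rho,i)$, combining $-\rho^2$ (order 6) in the first block with $i$ (order 4) in the second gives an element of order 12, so $N_6=12$.
\item At $\sigma_1$, the order-10 generator should give $\det=\pm\zeta^{j}$. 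The involution $-1_4$ contributes $\det=(-1)^2=1$, so the image has order 5, giving $N_1=5$.
\item At $\sigma_2$ and $\sigma_5$, one expects only $\pm1$ to survive, giving $N=2$.
\end{itemize}

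Two caveats apply. First, the entire argument is consistent only because $-1_4\in$ Stab and $\det(-1_2)=1$; hence no odd-weight constraint arises from it, unlike genus 1. Second, $N_i$ is only an upper bound on what is allowed: the criterion $N_i\mid k$ is necessary for non-vanishing but not sufficient, and the ``unless'' statements claim only necessity. This is why, e.g., $\sef$ can still vanish at $\sigma_1$ even though $5\nmid4$ already forces it.

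The main obstacle is the explicit determinant computation at $\sigma_1$ and $\sigma_2$. There the generators in the appendix are non-block-diagonal $4\times4$ integer matrices, and $\sigma_i$ involves $\zeta$ or $\eta=\tfrac13(1+2\sqrt2 i)$, so $\det(C\sigma_i+D)$ must be simplified using cyclotomic identities: $1+\zeta+\dots+\zeta^4=0$ for $\sigma_1$, and $3\eta^2-2\eta+3=0$ for $\sigma_2$. I would first verify numerically that each $\lambda_a^{(i)}$ is the claimed root of unity, then confirm exactly by reducing in $\mathbb Q(\zeta)$ and $\mathbb Q(\sqrt{-2})$. As a cross-check, I would compare with the vanishing pattern of $\sef,\ses,\chi_{10},\chi_{12}$ in \cref{tab:fixedpts}; for example, $\chi_{12}\neq0$ at $\sigma_3,\sigma_4,\sigma_6$ is consistent with $N_i\mid 12$.
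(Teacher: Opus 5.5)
Your proposal is correct and follows the paper's route: the fixed-point identity $F_k(\sigma_i)=\det(C_a^{(i)}\sigma_i+D_a^{(i)})^kF_k(\sigma_i)$, then evaluating these determinants on the listed generators. Your remark that $h\mapsto\det(C\sigma_i+D)$ is a character of the stabilizer usefully justifies testing only generators, which the paper leaves implicit. The one step to tighten is $\sigma_1$, where the $-1_4$ argument only gives $\det^5=1$ rather than order exactly $5$; the direct computation closes it at once, $\det(C\sigma_1+D)=\det\begin{pmatrix}0&-1\\ \zeta&1+\zeta+\zeta^{3}\end{pmatrix}=\zeta$. At $\sigma_2$ and $\sigma_5$ the $T\leftrightarrow U$ swap fixes the point and has $\det=-1$, so the stated congruences need no cyclotomic simplification at all.
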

\begin{proof}
    We first define
        \begin{equation}
            \text{det}^{(i)}_a = \text{det}[C_a^{(i)}\sigma_i + D_a^{(i)}]\coma
        \end{equation}
     where $C^{(i)}_a$ are constituent $2\times 2$ matrices in $h^{(i)}_a$, the $a$-th generator of Stab$_{(i)}$.
    Then the classical Siegel modular form satisfies
        \begin{equation}
            F_k(h^{(i)}_a \cdot \sigma_i) = (\text{det}^{(i)}_a)^k F_k(\sigma_i) = F_k(\sigma_i)\,,
        \end{equation}
    where the first equality follows from the transformation law of Siegel modular forms and the second from the definition of a fixed point and its stabilizer group. Thus we see that necessarily $F_k(\sigma_i) =0$ unless $(\text{det}^{(i)}_a)^k =1$ $\forall a$. For the elliptic fixed points (described in~\cref{tab:fixedpts}) we find
 \begin{equation}
     \begin{aligned}
         \sigma_1\,:& \quad  \text{det}_1 = e^{2\pi i /5}\\
         \sigma_2\,:& \quad  \text{det}_1 = -1\coma
        \text{det}_2 = -1\\
         \sigma_3\,:& \quad  \text{det}_1 = i\coma
        \text{det}_2 = 1 \coma
        \text{det}_3 = -1\\
        \sigma_4\,:& \quad  \text{det}_1 = -1\coma
        \text{det}_2 = -1 \coma
        \text{det}_3 = e^{-i\pi/3}\\
         \sigma_5\,:& \quad  \text{det}_1 = 1\coma
        \text{det}_2 = 1 \coma
        \text{det}_3 = -1\\
        \sigma_6\,:& \quad  \text{det}_1 = e^{5\pi i /6}\coma \text{det}_2 = 1\,.
     \end{aligned}
 \end{equation}
 With these values, the conditions on $k$ for non-vanishing values at the fixed points follows directly from the  requirement $(\text{det}^{(i)}_a)^k =1$ $\forall a$. 
\end{proof}
\noindent As is evident from the assumptions of this theorem, the above results do not hold for Siegel modular forms with non-trivial multiplier systems. However, it is clear that identical steps can be carried out for such forms. 

As it stands, the above theorem allows us to determine which forms of the ring vanish at which elliptic fixed points. We can also prove the following
\begin{theorem}\label{thm:vanishbip}
    Let $\mathbf{F}(\Omega)$ be a bipartite Siegel modular form of genus 2 and trivial multiplier system as defined in~\cref{app:SMFs}, with transformation property
        \begin{equation}
            \mathbf{F}(\Gamma\cdot M) = (C M +D) \mathbf{F}(M) (C M+D)^T
        \label{eq:prooftransform}
        \end{equation}
    for all $\Gamma\in \spz$. Then $\mathbf{F}(M)$ vanishes at the fixed points $\sigma_i$, $i=1,..,6$ of $\spz$.
\end{theorem}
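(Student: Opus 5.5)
The approach is to evaluate the transformation law \eqref{eq:prooftransform} at a fixed point for every element of its stabilizer. The vanishing then reduces to a statement in finite-group linear algebra: the stabilizer acting on $2\times 2$ symmetric matrices has no nonzero invariant vector. Fix $\sigma_i$ and let $h=\begin{pmatrix}A&B\\C&D\end{pmatrix}\in \mathrm{Stab}(\sigma_i)$. Set $J_h:=C\sigma_i+D\in GL(2,\C)$. Since $h\cdot\sigma_i=\sigma_i$, \eqref{eq:prooftransform} gives
\begin{equation*}
\mathbf{F}(\sigma_i)=J_h\,\mathbf{F}(\sigma_i)\,J_h^{T}\qquad \forall\, h\in \mathrm{Stab}(\sigma_i)\,.
\end{equation*}
So $X:=\mathbf{F}(\sigma_i)$, a symmetric $2\times2$ matrix, must be an invariant vector of the representation $\rho(h):X\mapsto J_hXJ_h^T$ on $\mathrm{Sym}_2(\C)\cong\C^3$. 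This $\rho$ is a genuine representation by the cocycle property $J_{hh'}=J_h(\sigma_i)J_{h'}(\sigma_i)$, which holds at a fixed point. The central element $-\mathbb{1}_4$ acts trivially, so effectively $\rho$ is a representation of $\mathrm{Stab}(\sigma_i)/\{\pm1\}$.

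The key step is to identify $\rho$ with the dual of the isotropy representation on the tangent space. Differentiating $M\mapsto (AM+B)(CM+D)^{-1}$ at the fixed point gives $dM\mapsto J_h^{-T}\,dM\,J_h^{-1}$ on $T_{\sigma_i}\h_2\cong\mathrm{Sym}_2(\C)$. The pairing $\langle X,dM\rangle=\mathrm{tr}(X\,dM)$ is nondegenerate and invariant under the simultaneous action, since $\mathrm{tr}(J X J^T J^{-T} dM J^{-1})=\mathrm{tr}(X dM)$. Hence $\rho$ is the contragredient of the isotropy representation. For a finite group, a representation and its dual have invariant subspaces of the same dimension, because the character of the dual is the complex conjugate and the number of invariants is the average of the character.

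It remains to show that the isotropy representation at each $\sigma_i$ has no invariant tangent vector. The plan is to argue this from the fact that the $\sigma_i$ are \emph{isolated} elliptic fixed points, i.e.\ the fixed locus of $\mathrm{Stab}(\sigma_i)$ is zero-dimensional in Gottschling's classification. By Cartan linearization (the stabilizer is finite and acts holomorphically), the fixed set near $\sigma_i$ is a submanifold whose tangent space is exactly the invariant subspace of the isotropy representation. Isolatedness therefore forces this invariant subspace to vanish, so $X=0$.

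The main obstacle is justifying isolatedness rigorously rather than quoting it. As a fallback and cross-check, I would compute directly with the generators from \cref{app:generators}: diagonalize (or simultaneously triangularize) the $J_{h_a^{(i)}}$ and check that no common eigenvector of $X\mapsto JXJ^T$ has all eigenvalues equal to $1$. For a single $J$ with eigenvalues $\lambda_1,\lambda_2$, the eigenvalues of $\rho$ are $\lambda_1^2,\lambda_1\lambda_2,\lambda_2^2$.

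For example, at $\sigma_3=\mathrm{diag}(i,i)$ the generators include the embeddings of $S$ in each factor and the swap of $T$ and $U$, giving $J=\mathrm{diag}(i,1)$ and $J=\mathrm{diag}(1,i)$. Under these the diagonal entries of $X$ pick up factors $-1$ and the off-diagonal entry picks up $i$, so $X=0$. For $\sigma_1$ a single order-10 generator suffices: its $J$ has eigenvalues that are primitive fifth or tenth roots of unity, with no product $\lambda_a\lambda_b$ equal to $1$ (the determinant $e^{2\pi i/5}$ from the previous theorem constrains this). The analogous eigenvalue checks for $\sigma_2,\sigma_4,\sigma_5,\sigma_6$ complete the proof; $\sigma_5$ and $\sigma_2$ are the cases where several noncommuting generators must be combined.
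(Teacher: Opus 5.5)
Your proposal is correct, but its main argument takes a different route from the paper's. Both start from the same identity $X=J_hXJ_h^T$ for all $h\in\mathrm{Stab}(\sigma_i)$.

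The paper then proceeds purely computationally. It vectorizes to $\mathbb{C}^4$ and uses that $N\otimes N$ has eigenvalues $\lambda_a\lambda_b$. It sorts the points by whether some listed generator has no such product equal to $1$. For $\sigma_3$ and $\sigma_5$ it checks by hand that the null spaces of $1_4-N_a\otimes N_a$ share no vector.

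You instead recognize $X\mapsto J_hXJ_h^T$ as the contragredient of the isotropy representation on $T_{\sigma_i}\mathbb{H}_2$. This reduces the theorem to the geometric fact that $\sigma_i$ is an isolated point of the fixed locus of its stabilizer. That explains why Gottschling's isolated points are exactly the relevant ones, and it shows the argument can force vanishing only there, not along fixed curves or surfaces. The price is that the key input is imported from the classification. The paper's route is self-contained given the explicit generators, and your fallback computation is essentially the paper's proof.

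One step needs patching: you take $\mathbf{F}(\sigma_i)$ to be symmetric. That holds for gradients and for the rescaled F-terms. However, the appendix defines a bipartite form as an arbitrary $2\times2$ matrix of functions, which is why the paper works in $\mathbb{C}^4$ and encounters the antisymmetric null vector $(0,-1,1,0)^T$.

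The fix is one line. With $\epsilon=\begin{pmatrix}0&1\\-1&0\end{pmatrix}$ one has $N\epsilon N^T=\det(N)\,\epsilon$, so $F_{12}-F_{21}$ transforms like a weight-$1$ form. The pointwise argument of the preceding theorem with $k=1$ then kills it at every $\sigma_i$, because each stabilizer contains an element with $\det(C\sigma_i+D)\neq1$.

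A minor remark on your fallback: at $\sigma_2$ you do not need to combine generators. For $h_2^{(2)}$ one finds $\mathrm{tr}\,N=-\sqrt2\,i$ and $\det N=-1$, so the eigenvalues are $e^{-i\pi/4}$ and $e^{-3i\pi/4}$. Their pairwise products $-i,-1,i$ are never $1$, so this single generator already suffices. (The entry $e^{\pm i\pi/4}$ in \cref{tab:generator_eigs} is inconsistent with $\det=-1$.)
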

\begin{proof}
    We first vectorize the transformation property so that~\cref{eq:prooftransform} becomes
    \begin{equation}
        \text{vec}[\mathbf{F}(\Gamma\cdot M)] 
            = \Big((C M +D) \otimes (C M +D)\Big)\text{vec}[\mathbf{F}(M)]\,.
    \label{eq:proofvec}
    \end{equation}
    As described above, the fixed points $\{\sigma_i\}$ each have stabilizer groups $\text{Stab}(\sigma_i)$ with generators $h_a^{(i)}$, $a = 1,..,\text{ord}(\text{Stab}(\sigma_i))$, such that $h_a^{(i)}\cdot\sigma_i = \sigma_i$. Thus for each of the generators of $\text{Stab}(\sigma_i)$,~\cref{eq:proofvec} becomes
        \begin{equation}
        \text{vec}[\mathbf{F}(\sigma_i)] 
            = \Big((C_a^{(i)}\sigma_i +D_a^{(i)}) \otimes (C_a^{(i)}\sigma_i +D_a^{(i)})\Big)\text{vec}[\mathbf{F}(\sigma_i)]\,.
    \label{eq:proofeigen}
    \end{equation}
\cref{eq:proofeigen} defines an eigenvalue problem. Non-trivial solutions for $\text{vec}[\mathbf{F}(
    \sigma_i)]$ exist if and only if the determinant of 
    \begin{equation}
        \begin{aligned}
            M^{(i)}_a := \bigg[1_4 -     \Big((C_a^{(i)}\sigma_i +D_a^{(i)}) \otimes (C_a^{(i)}\sigma_i +D_a^{(i)})\Big)\bigg] 
        \end{aligned}
    \label{eq:proofdet}
    \end{equation}
    vanishes. This requires the vanishing of one or more of the eigenvalues of the matrix $M^{(i)}_a$. The eigenvalues of $M^{(i)}_a$ have the form $1 - \lambda_i\lambda_j$, where the $\lambda_i$ are eigenvalues of $(C_a^{(i)}\sigma_i +D_a^{(i)})$. Therefore, a non-trivial solution to~\cref{eq:proofeigen} exists only if the product of two eigenvalues of $(C_a^{(i)}\sigma_i +D_a^{(i)})$ is equal to unity. 
    
    Before continuing the analysis, we stop to emphasize a crucial point. The vanishing of $\text{det}(M^{(i)}_a)$ is sufficient to determine non-trivial solutions to~\cref{eq:proofeigen} \textit{for a given $M^{(i)}_a$}. Since several fixed points $\{\sigma_i\}$ have more than one generator in their stabilizer group, non-trivial values of the Siegel modular form exist only if there exists a vector $\text{vec}[\mathbf{F}(\sigma_i)]$ that satisfies~\cref{eq:proofeigen} for each value of $a$.
    
    We now turn to the eigenvalues of the $(C_a^{(i)}\sigma_i +D_a^{(i)})$ matrices, which are displayed in~\cref{tab:generator_eigs}, found in~\cref{app:SMFs}. By inspection, we see that indeed there are cases where the eigenvalues of $M^{(i)}_a$ vanish. We can create three categories of fixed points:
    \begin{itemize}
        \item Category \#1: For a fixed $\sigma_i$, none of the matrices $M^{(i)}_a$ have a zero eigenvalue
        \item Category \#2: For a fixed $\sigma_i$, there exists one value of $a\in 1,..,\text{ord}(\text{Stab}(\sigma_i)$ such that the corresponding $M^{(i)}_a$ does not have any zero eigenvalues.
        \item Category \#3: For a fixed $\sigma_i$, every matrix $M^{(i)}_a$ has at least one zero eigenvalue.
    \end{itemize}
    Inspecting the values in~\cref{tab:generator_eigs}, we see that fixed points $\{\sigma_1,\sigma_6\}$ are in Category \#1, points $\{\sigma_2,\sigma_4\}$ are in Category \#2, and finally points $\{\sigma_3,\sigma_5\}$ are in Category \# 3.

    It then follows immediately that at the fixed points $\sigma_1,\sigma_2,\sigma_4,\sigma_6$, the function $\mathbf{F}(M)$ vanishes. We must then consider the remaining points of Category \#3. By definition, the points in this category have matrices $M^{(i)}_a$ with vanishing eigenvalues for every value of $a$. However, $\mathbf{F}(M)$ can only attain non-null values if, for fixed $i$ and all $a$, the matrices $M^{(i)}_a$ share at least one eigenvector with null eigenvalue. The relevant eigenvectors as displayed in~\cref{tab:generator_eigenvects}. We see that the sole null eigenvector of $M^{(3)}_1$ is not shared by $M^{(3)}_2$ or $M^{(3)}_3$. Thus there is no common eigenvector and no non-trivial solutions $\mathbf{F}(\sigma_3)$ can exist. A straightforward analysis reveals that no linear superposition of any of the pairs of null eigenvectors of the $M^{(6)}_a$ is a common eigenvector for all $a$, and so $\mathbf{F}(\sigma_3)$ cannot have any non-trivial solutions either.   
\end{proof}
\noindent We have an immediate corollary:
\begin{corollary}\label{cor:extrema}
Let $F(M)$ be a Siegel modular function with trivial multiplier system. Then $F(M)$ will have extrema at the fixed points $\sigma_i$, $i=1,\dots,6$ of $\spz$.
\end{corollary}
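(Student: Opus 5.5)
The plan is to reduce the corollary to \cref{thm:vanishbip} by showing that the gradient of a weight-zero Siegel modular function is a bipartite Siegel modular form. Let $F$ be $\spz$-invariant with trivial multiplier, so $F(\Gamma\cdot M)=F(M)$ for all $\Gamma\in\spz$. Collect its holomorphic derivatives into the symmetric matrix
\begin{equation*}
\mathbf{F}(M):=\left(\tfrac{1}{2}(1+\delta_{ij})\,\partial_{M_{ij}}F(M)\right)_{i,j=1,2}\,.
\end{equation*}
The factor $\tfrac12(1+\delta_{ij})$ makes this the gradient with respect to the matrix variable, so that $dF=\mathrm{tr}(\mathbf{F}\,dM)$.

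First, differentiate the invariance relation by the chain rule. For $M' = (AM+B)(CM+D)^{-1}$, the standard symplectic identity gives
\begin{equation*}
dM' = (CM+D)^{-T}\,dM\,(CM+D)^{-1}.
\end{equation*}
Hence
\begin{equation*}
\mathrm{tr}\big(\mathbf{F}(M')\,dM'\big)=\mathrm{tr}\big((CM+D)^{-1}\mathbf{F}(M')(CM+D)^{-T}dM\big)=\mathrm{tr}\big(\mathbf{F}(M)\,dM\big).
\end{equation*}
Since $dM$ ranges over all symmetric matrices and $\mathbf{F}$ is symmetric, this yields $\mathbf{F}(\Gamma\cdot M)=(CM+D)\,\mathbf{F}(M)\,(CM+D)^T$. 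This is exactly the transformation law \cref{eq:prooftransform}.

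Second, apply \cref{thm:vanishbip}. Its proof only uses the transformation law evaluated at the fixed point $\sigma_i$, together with the generators $h_a^{(i)}$ of Stab$(\sigma_i)$. Holomorphy and cusp behaviour play no role, so it suffices that $F$ be regular at $\sigma_i$. We then get $\mathbf{F}(\sigma_i)=0$, i.e.\ $\partial_T F=\partial_U F=\partial_Z F=0$ at each $\sigma_i$. The antiholomorphic derivatives vanish by complex conjugation. For real-valued invariant combinations such as $|F|^2$ or the scalar potential, the same argument applies after freezing the $\bar M$ dependence: treat $\bar M$ as transforming with the conjugate factor, or simply expand the invariant at first order in $\delta M$ and $\delta\bar M$ separately. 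This shows that $\sigma_i$ is a critical point, hence an extremum (more precisely a stationary point) in the sense used in the paper.

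The main obstacle is regularity at the fixed points. A general modular function $g/h$ can have poles at some $\sigma_i$, for instance where $h=\chi_{10}$ vanishes. There, stationarity is meaningless, so the statement should be understood for $F$ holomorphic in a neighbourhood of $\sigma_i$, as is guaranteed for $\mc H$ by the Definition. A secondary subtlety concerns the multiplier: with the $-1$ character, the derivative matrix picks up signs, and the eigenvalue analysis of \cref{thm:vanishbip} would have to be redone. This is excluded by the hypothesis of trivial multiplier.
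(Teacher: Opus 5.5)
Your proposal is correct and follows the same route as the paper. Like the paper, you show that the gradient $\partial_M F$ of an invariant function obeys the bipartite law $\partial_M F(\Gamma\cdot M)=(CM+D)\,\partial_M F(M)\,(CM+D)^T$ and then invoke \cref{thm:vanishbip} at each $\sigma_i$. Your extra remarks make explicit what the paper's one-line proof leaves implicit: the $\tfrac12$ on the off-diagonal entry needed for that transformation law, the requirement that $F$ be regular at $\sigma_i$, and the fact that the conclusion is stationarity rather than a genuine extremum.
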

\begin{proof}
The result follows immediately from~\cref{thm:vanishbip} by applying the matrix-derivative $\partial_M$ to $F(M)$. The resulting matrix of first derivatives transforms under $\spz$ as 
\begin{equation}
    \partial_MF(\Gamma\cdot M) = (CM+D)\partial_MF(M)(CM+D)^T\,.
\end{equation}
Therefore, $\partial_M F$ is a bipartite vector-valued Siegel modular function and the results of~\cref{thm:vanishbip} apply. 
\end{proof}
\noindent We can now use Corollary~\ref{cor:extrema} to deduce that the quantities
\begin{equation}
    \partial_M V\,,\; \partial_{M_{ij}} G\,,\; \partial_{M_{ij}}\partial_S V\,,\; \widetilde{F}_{i}
\end{equation}
must all vanish at the fixed points of $\spz$.

Equipped with these new results, we can prove generalizations of the no-go theorems in~\cite{Leedom:2022zdm}:
\begin{theorem}
    At a point $(S_0,M_0)$, where $M_0$ is one of the fixed points of $\spz$, the scalar potential~\cref{eq:4mod_pot} cannot have de sitter vacua if   
        \begin{equation}
            \widetilde{F}_S(S_0) = 0\,.
        \end{equation}
\end{theorem}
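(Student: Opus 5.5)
The plan is to evaluate the potential \cref{eq:4mod_pot} directly at $(S_0,M_0)$ and show that the hypothesis forces $V\le 0$ there. Any vacuum at this point then has non-positive energy, so it cannot be de Sitter.

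\textbf{Step 1: the geometric F-terms vanish.} By Corollary~\ref{cor:extrema} and the remark following it, the re-scaled geometric F-terms $\widetilde{F}_a$, $a=T,U,Z$, transform as bipartite Siegel modular forms with trivial multiplier system. \cref{thm:vanishbip} then gives $\widetilde{F}_a(M_0)=0$ at every fixed point $M_0=\sigma_i$. Consequently
\begin{equation}
\hat{V}(M_0,\bar M_0)=K^{a\bar b}\widetilde{F}_a\widetilde{\bar F}_{\bar b}\big|_{M_0}=0\,.
\end{equation}

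\textbf{Step 2: the dilaton F-term vanishes.} By \cref{eq:dil_F} we have $\widetilde{F}_S=\Omega\,(\partial_S\ln\Omega+\partial_S k)$. Since $\Omega(S_0)\neq 0$ (otherwise $W\equiv 0$ and $V=0$ trivially), the hypothesis $\widetilde{F}_S(S_0)=0$ gives $\partial_S\ln\Omega+k_S=0$ at $S_0$. Hence $A(S_0,\bar S_0)=0$.

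\textbf{Step 3: sign of the potential.} Inserting both results into the last line of \cref{eq:4mod_pot} yields
\begin{equation}
V(S_0,M_0)=-3\,e^{k}\,\mc{Z}(M_0)\,|\Omega(S_0)|^2\,|\mc{H}(M_0)|^2\le 0\,.
\end{equation}
This uses $e^k>0$ and $\mc{Z}=1/(Y^\gamma|\chi_\gamma|^2)>0$ on $\h_2$. The point is therefore either Minkowski (when $\mc{H}(M_0)=0$, allowed by the factors $(\text{J}_\ell-\text{J}_\ell|_{\sigma_i})$) or AdS, and never de Sitter. Equivalently, the point is supersymmetric with $V=-3e^K|W|^2$.

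\textbf{Main obstacle.} The delicate step is the behaviour of $\mc{Z}$ at fixed points where the cusp form vanishes: by \cref{tab:fixedpts}, $\chi_{10}$ vanishes at $\sigma_3,\sigma_4,\sigma_6$ and $\chi_{12}$ at $\sigma_1$. There $\mc{Z}$, and possibly $\widetilde{F}_a$, diverge. I would handle this by noting that the relevant combination is $W=\Omega\mc{H}/\chi_\gamma$, so $\mc{Z}|\mc{H}|^2=|W|^2/Y^\gamma$. One must then either restrict to fixed points where $\chi_\gamma(M_0)\neq0$, or require $\mc{H}$ to vanish to sufficient order that $W$, $\widetilde{F}_a$ and $V$ stay finite. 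In the latter case the same sign argument applies to the finite limit, with the sign fixed by $-3|W|^2$.
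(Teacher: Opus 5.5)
Your Steps 1--3 are the paper's proof: the bipartite $\widetilde F_a$ vanish at $\sigma_i$, so $\hat V_0=0$, and $\widetilde F_S=0$ then leaves $V_0=-3e^{k_0}\mc{Z}_0|\Omega_0|^2|\mc{H}_0|^2$. You are in fact more careful than the paper, which drops the factor $|\mc{H}_0|^2$ and asserts strict negativity, whereas only $V_0\le 0$ follows. The one claim to withdraw is the second option in your last paragraph: when $\chi_\gamma(M_0)=0$, the vanishing of $\widetilde F_a(M_0)$ is automatic and says nothing about $F_a=\Omega\,\widetilde F_a/\chi_\gamma$. That quantity is $\Omega$ times the covariant derivative of the weight-$(-\gamma)$ form $\mc{H}/\chi_\gamma$, so it transforms with an extra $\det(CM+D)^{-\gamma}$ and is not covered by Theorem~\ref{thm:vanishbip}. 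Hence the sign of $V_0$ is not fixed by $-3|W|^2$ without a separate weight-dependent check; the paper tacitly works in your first option, $\chi_\gamma(M_0)\neq 0$.
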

\begin{proof}
    This statement is in essence an extension of Theorem 1 in~\cite{Leedom:2022zdm}. We will not be as general as that theorem and will instead focus purely on the statement above. The statement follows from the fact that if $\widetilde{F}_S(S_0) =0$ then~\cref{eq:4mod_pot} becomes
    \begin{equation}
        V_0 = e^{k_0} \mc{Z}_0\lvert\Omega_0\rvert^2\bigg( \hat{V}_0 -3\lvert\mathcal{H}_0\rvert^2\bigg)\,,
    \end{equation}
    where the subscript $0$ indicates that the quantity has been evaluated at $(S_0,M_0)$. However, every term in $\hat{V}$ contains the re-scaled F-terms of the geometric moduli $T,U,Z$. These form a bipartite Siegel modular form, and therefore they must vanish at the fixed points of $\spz$. Hence when $\widetilde{F}_S=0$, the scalar potential at the fixed points becomes
    \begin{equation}
        V_0 = -3 e^{k_0}\mc{Z}_0\lvert \Omega_0\rvert^2 <0\,.
    \end{equation}
\end{proof}
\noindent Next, we have
\begin{theorem}
    At a point $(S_0,M_0)$, where $M_0$ is one of the fixed points of $\spz$, the scalar potential~\cref{eq:4mod_pot} with the dilaton K\"{a}hler potential
    \begin{equation}
        k(S,\bar{S}) = -\ln(S+\bar{S})
    \end{equation}
    can not simultaneously satisfy
    \begin{itemize}
        \item[(i)] $V(S_0,M_0) >0$,
        \item[(ii)] $\partial_SV(S_0,M_0) =0$,
        \item[(iii)] Eigenvalues of the Hessian of the potential at $(S_0,M_0)$ are all $\ge0$.
    \end{itemize}
\end{theorem}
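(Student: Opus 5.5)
The plan is to reduce everything to the dilaton direction at the fixed point, following the structure of Theorem 2 in~\cite{Leedom:2022zdm}. By Theorem~\ref{thm:vanishbip} and Corollary~\ref{cor:extrema}, the rescaled F-terms $\widetilde F_a$ ($a=T,U,Z$) vanish at $M_0$. Therefore $\hat V(M_0,\bar M_0)=0$, and every first derivative $\partial_M V$ and every mixed derivative $\partial_M\partial_S V$ vanishes there as well. At $(S_0,M_0)$ the potential then reads
\begin{equation}
V_0 = e^{k_0}\mc{Z}_0\lvert\Omega_0\rvert^2\lvert\mathcal{H}_0\rvert^2\big(A(S_0,\bar S_0)-3\big)\,,
\end{equation}
and the Hessian is block diagonal between $S$ and $M$. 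The $S\bar S$ block of the Hessian is therefore one of its diagonal blocks, so condition (iii) requires that block to be positive semidefinite. It thus suffices to show that (i), (ii) and positive semidefiniteness of the $S$-block are incompatible. In the $S$ direction the potential is effectively
\begin{equation}
V(S)= c\, e^{k}\lvert\Omega\rvert^2\big(K^{S\bar S}\lvert\partial_S\ln\Omega+k_S\rvert^2-3\big)\,,\qquad c>0\,.
\end{equation}

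Next I would specialize to $k=-\ln(S+\bar S)$, writing $s=S+\bar S$. Then $k_S=-1/s$ and $K^{S\bar S}=s^2$, and, in the paper's notation, $\widetilde F_S=\Omega'-\Omega/s$. Up to the positive constant $c$, the single-field potential becomes
\begin{equation}
V(S)=\frac{1}{s}\Big(s^2\lvert\Omega'-\Omega/s\rvert^2-3\lvert\Omega\rvert^2\Big)=s\lvert\widetilde F_S\rvert^2-\frac{3}{s}\lvert\Omega\rvert^2 ,
\end{equation}
with $\Omega$ an arbitrary holomorphic function. I would then compute $\partial_S V$ and impose (ii) to express $\Omega''$, or equivalently $\partial_S\widetilde F_S$, in terms of $\Omega$ and $\Omega'$ at $S_0$. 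I would substitute this into the $2\times2$ real Hessian, that is, into $V_{S\bar S}$ and $V_{SS}$. Condition (iii) is $V_{S\bar S}\ge|V_{SS}|$. Condition (i) gives $s^2|\widetilde F_S|^2>3|\Omega|^2$, which in particular forces $\widetilde F_S\neq0$. This excludes the case already covered by the previous theorem.

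The main step, and the expected obstacle, is the inequality itself. My approach is to use the sGoldstino-type bound: for a no-scale-like field with $R_{S\bar S S\bar S}=2g_{S\bar S}^2$, positivity of $V_{S\bar S}-|V_{SS}|$ along the Goldstino direction at a stationary point reduces, up to positive factors, to
\begin{equation}
\Big(\tfrac{2}{3}-\tfrac{R}{\,\cdot\,}\Big)\lvert F\rvert^4\;\lesssim\;\ldots
\end{equation}
With all supersymmetry breaking in $S$ (since $F_M=0$) and the sectional curvature $R=2$ of $-\ln(S+\bar S)$, the standard bound $\sigma=\tfrac{2}{3}-R\,\ge0$ fails. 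The result is that $m^2_{\text{sGoldstino}}\le 0$ whenever $V_0>0$, contradicting (iii).

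Concretely, I would evaluate the Goldstino-direction mass $m^2\propto V_{S\bar S}-\operatorname{Re}\!\big(e^{-2i\theta}V_{SS}\big)$, where $\theta$ is the phase of $\widetilde F_S$. I would check algebraically, using only (ii), that this quantity equals a nonpositive expression plus a term proportional to $-V_0$. If the general curvature argument turns out to be fiddly to transcribe, the fallback is brute force: parametrize $\Omega_0$, $\Omega_0'$ and $\Omega_0''$ (the latter fixed by (ii)) together with $s$, and directly show that $\det$ of the real Hessian block, or its trace, becomes negative once $V_0>0$.
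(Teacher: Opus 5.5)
Your reduction to the dilaton block is the paper's setup: $\widetilde F_{T,U,Z}=\hat V=0$ at $M_0$, a block-diagonal Hessian, $\Omega''_0$ eliminated via (ii), and (iii) read as $V_{S\bar S}\ge|V_{SS}|$. The gap is in the step you single out as the main one. The mass along the fixed real direction you choose, $V_{S\bar S}-\operatorname{Re}(e^{-2i\theta}V_{SS})$, cannot be shown to be nonpositive from (i) and (ii).

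Write $V=c\,(s|\widetilde F_S|^2-3|\Omega|^2/s)$ with $c=\mathcal{Z}_0|\mathcal{H}_0|^2$. Then $\partial_S V=c\,\big(s\,\Omega''\,\overline{\widetilde F}_S-2\bar\Omega\,\widetilde F_S/s\big)$. Differentiating once more in $S$ puts a term $c\,s\,\Omega'''\,\overline{\widetilde F}_S$ into $V_{SS}$. Here $\Omega'''_0$ is unconstrained, and $\widetilde F_S\neq 0$ by (i). So $V_{SS}$ can be any complex number, while $\theta$ does not depend on $\Omega'''_0$. Hence for any fixed direction $e^{i\phi}$, the quantity $V_{S\bar S}+\operatorname{Re}(e^{2i\phi}V_{SS})$ can be made positive. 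Your claimed identity "nonpositive expression plus a term $\propto -V_0$, using only (ii)" is therefore false. The determinant in your fallback also has no definite sign, since both eigenvalues may be negative or only one. The sGoldstino-type bound you are recalling constrains only the Hermitian part $V_{i\bar j}f^i\bar f^j$, i.e. the average over the complex Goldstino direction, precisely because the holomorphic part is uncontrolled.

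The fix is to look at $V_{S\bar S}$ alone, which is what the paper does via the trace $\partial_s^2V+\partial_b^2V\propto V_{S\bar S}$ of the real dilaton block. Condition (ii) gives $|\Omega''_0|=2|\Omega_0|/s^2$. With $sV_0=c\Lambda^4$ one then finds
\[
V_{S\bar S}=\frac{2c}{s^3}\Big(|\Omega_0|^2-s^2|\widetilde F_S|^2\Big)=-\frac{2c}{s^3}\Big(2|\Omega_0|^2+\Lambda^4\Big)<0 .
\]
This already contradicts your own condition $V_{S\bar S}\ge|V_{SS}|\ge 0$. The "trace" option in your fallback is the correct route, and no curvature inequality is needed.
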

\begin{proof}
    This is an extension of Theorem 2 of~\cite{Leedom:2022zdm} to $\spz$, although we have changed the wording slightly to focus on the fixed points. The proof proceeds by contradiction. For the moment, we will keep the dilaton K\"{a}hler potential as a general function and then plug in the assumption $k(S,\bar{S}) = - \ln(S+\bar{S})$ when required. Since $M_0$ is a fixed point, we know from Corollary~\ref{cor:extrema} that
    \begin{equation}
        \begin{aligned}
           & \partial_{M_{ij}} V(S_0,M_0) = 0\,,\\
           & \widetilde{F}_T(M_0) = \widetilde{F}_U(M_0) =\widetilde{F}_Z(M_0) = 0\,,\\
            &\hat{V}_0 = 0\,.
        \end{aligned}
    \end{equation}
Using assumption (i), we can introduce a positive number $\Lambda^4>0$ such that
\begin{equation}
    V(S_0,M_0) = e^{k_0}\mc{Z}_0 \lvert\mathcal{H}_0\rvert^2\Lambda^4\,.
\label{eq:thrm2_pot}
\end{equation}
Note that $\mathcal{H}_0$ must be non-zero using assumption (i) since $\hat{V}_0 = 0$.~\cref{eq:thrm2_pot} can be solved to yield an expression for the first derivative of $\Omega(S)$:
\begin{equation}
    (\Omega_S)_0 = -(k_S)_0\Omega_0 \pm \sqrt{(k_{S\bar{S}})_0}\,\bigg\{\Lambda^2 \pm i\bigg(\frac{3\lvert\Omega_0\rvert^2\lvert\mathcal{H}_0\rvert^2 - \lvert\Omega_0\rvert^2\hat{V}_0}{\lvert\mathcal{H}_0\rvert^2}\bigg)^{\frac12}\bigg\}\,.
\label{eq:om_1div}
\end{equation}
Similarly, assumption (ii) can be used to determine an expression for the second derivative of $\Omega$ at $S_0$:
\begin{equation}
\begin{split}
    (\Omega_{SS})_0 &=  \bar{\Omega}_0 e^{2i\sigma_0}(k_{S\bar{S}})_0 \bigg(2 - \frac{1}{\lvert\mathcal{H}_0\rvert^2}\hat{V}_0\bigg) + \bigg(\frac{k_{SS\bar{S}}}{k_{S\bar{S}}}-k_S\bigg)_{\!0}(\widetilde{F}_S)_0 \\& \;\;\;\;- (k_{SS})_0\Omega_0 - (k_S)_0(\Omega_S)_0\,,
\end{split}
\label{eq:om_2div}
\end{equation}
where $\sigma_0 = \text{arg}((\Omega_S)_0+(k_S)_0\Omega_0)$. We will use~\cref{eq:om_1div} and~\cref{eq:om_2div} below.  Next, we note that
\begin{equation}
    \partial_{M_{ij}}V = e^k\mc{Z}\lvert\Omega\rvert^2 \bigg\{\bar{\mathcal{H}}(A(S,\bar{S})-3)\widetilde{F}_{M_{ij}} +\partial_{M_{ij}}\hat{V} - \bigg(\frac{\partial_{M_{ij}}\chi_\gamma}{\chi_\gamma} +\gamma\,\partial_{M_{ij}}\ln(Y)\bigg)\hat{V}\bigg\}\,.
\label{eq:partial_pot}
\end{equation}
 The derivatives in~\cref{eq:partial_pot} vanish at $(S_0,M_0)$, as argued above from Corollary~\ref{cor:extrema}. However we note that each term in~\cref{eq:partial_pot} is proportional to one of the geometric moduli re-scaled F-terms, which also vanish at the fixed points. Therefore, we have
\begin{equation}
    \partial_S^k \partial_{\bar{S}}^l\partial_{M_{ij}}V(S_0,M_0) = 0
\end{equation}
for all positive integers $k,l$. This implies that the Hessian matrix is block diagonal, with one block for the geometric moduli $T,U,Z$ and another block for the dilaton. The components for the dilaton block of the Hessian are
\begin{equation}
    \begin{aligned}
        \partial_s^2V &= 2\partial_S\partial_{\bar{S}}V + 2\text{Re}(\partial_S^2V)\,,\\
        \partial_b^2V &= 2\partial_S\partial_{\bar{S}}V - 2\text{Re}(\partial_S^2V)\,,\\
        \partial_s\partial_b V &= -2\text{Im}(\partial_S^2V)\,,
    \end{aligned}
\end{equation}
where we have split $S=s+ib$. Using~\cref{eq:om_1div} and~\cref{eq:om_2div}, as well as $\hat{V}_0=0$ and the assumption $k(S,\bar{S}) = -\ln(S+\bar{S})$, we find
\begin{equation}
    \partial_S\partial_{\bar{S}}V(S_0,M_0) = -\frac{2\mc{Z}_0\lvert\mathcal{H}_0\rvert^2(2\lvert\Omega_0\rvert^2+\Lambda^4)}{(S_0+\bar{S}_0)^3}\,.
\end{equation}
Since we require that $\text{Re}(S_0)>0$ for a sensible coupling constant, we have that $\partial_S\partial_{\bar{S}} V(S_0,M_0) < 0$. The trace of the dilaton block of the Hessian is then simply $2\partial_S\partial_{\bar{S}}V <0$, which implies that at least one eigenvalue of this block must be negative and assumption (iii) is violated.
\end{proof}

\noindent As in~\cite{Leedom:2022zdm}, we conclude that to find minima with a positive value of the potential, we shall break supersymmetry in the dilaton direction. This can be achieved by taking the function $A(S,\bar{S})\neq 0$, cf.~\cref{eq:potblocks}.

\section{Minimizing the scalar potential}\label{sec:minima}
We will now look at the full scalar potential of the $STUZ$ theory, in particular in the presence of non-vanishing dilaton F-term implying $A(S,\bar S)>0$. For this purpose, it is useful to write out the scalar potential in~\cref{eq:4mod_pot} in a slightly less compact form which shows the structure of the terms pertaining to the different moduli explicitly. 

The F-term scalar potential reads
\begin{equation}\label{eq:scalarpot}
\begin{split}
    V=&\,\frac{e^{k(S,\bar{S})}|\Omega(S)|^2}{Y^3|\chi_\gamma|^{2}}\Bigg\{\left(A(S,\bar{S})-3\right)|\sew{H}|^2+\hat{V}^{(T)}+\hat{V}^{(U)}+\hat{V}^{(Z)}\\
    &+\frac{8}{\gamma}(\Im Z)^2\Re\left[\left(\partial_T\sew{H}-\sew{H}\,\frac{\nabla_{\!T}\chi_\gamma}{\chi_\gamma}\right)\left(\partial_{\bar{U}}\bar{\sew{H}}-\bar{\sew{H}}\,\frac{\nabla_{\!\bar U}\bar\chi_\gamma}{\bar\chi_\gamma}\right)\right]\\
    &+\frac{8}{\gamma} \Im T \Im Z 
    \Re\left[\left(\partial_Z\sew{H}-\sew{H}\,\frac{\nabla_{\!Z}\chi_\gamma}{\chi_\gamma}\right)\left(\partial_{\bar{T}}\bar{\sew{H}}-\bar{\sew{H}}\,\frac{\nabla_{\!\bar T}\bar\chi_\gamma}{\bar\chi_\gamma}\right)\right]\\
    &+\frac{8}{\gamma}\Im U \Im Z \Re\left[\left(\sew{H}_Z-\sew{H}\,\frac{\nabla_{\!Z}\chi_\gamma}{\chi_\gamma}\right)\left(\partial_{\bar{U}}\bar{\sew{H}}-\bar{\sew{H}}\,\frac{\nabla_{\!\bar U}\bar\chi_\gamma}{\bar\chi_\gamma}\right)\right]
    \Bigg\}\,,
\end{split}
\end{equation}
where
\begin{equation}
\begin{split}
    &\hat{V}^{(T)}=-\frac{(T-\bar T)^2}{\gamma}\left|\partial_T\sew{H}-\sew{H}\,\frac{\nabla_{\!T}\chi_\gamma}{\chi_\gamma}\right|^2\,,\\& \hat{V}^{(U)}=-\frac{(U-\bar U)^2}{\gamma}\left|\partial_U\sew{H}-\sew{H}\,\frac{\nabla_{\!U}\chi_\gamma}{\chi_\gamma}\right|^2\,,\\
    & \hat{V}^{(Z)}=\frac{2}{\gamma}\left(\Im T \Im U+\Im Z^2\right)\left|\partial_Z\sew{H}-\sew{H}\,\frac{\nabla_{\!Z}\chi_\gamma}{\chi_\gamma}\right|^2\,.
\end{split}
\end{equation}
Note that $\hat{V}^{(M_{ij})}\geq 0$, and that the overall structure of $V$ is remarkably similar to that of the $\slz$-invariant potential derived in~\cite{Cvetic:1991qm}. Moreover, as shown in~\cite{Leedom:2022zdm} for $\slz$-symmetric setups, for $A(S,\bar{S})=0$, the scalar potential is always negative. We will drop this assumption below.

To compute the minima of this scalar potential with the ingredients described in the present work, we focus again on the six elliptic fixed points. To perform the numerical evaluations of this section (and in general of the present work) we have coded the Mathematica package~\hyperlink{https://github.com/nrighi/smfs.git}{SMFs.m}~\cite{righi:2026SMF}. This package is based on the discussion in Appendix C of~\cite{Klemm:2015iya}, which nicely summarizes the results of~\cite{eichlerzagier} (for an independent code realizing the same Appendix C of~\cite{Klemm:2015iya} in Pari/GP see~\cite{kidambi:2023SMF}).

First, we compute the Hessian matrix and its eigenvalues at all the six fixed points for $A(S,\bar{S})=0$. In this case, we find that the signs of the eigenvalues are
\begin{equation}\label{eq:eigenval}
    \begin{split}
        &\sigma_1\,: (-,-,-)\,,\qquad\sigma_2\,: (+,+,+)\,,\\
        &\sigma_3\,: (+,+,-)\,,\qquad\sigma_4\,: (-,-,-)\,,\\
        &\sigma_5\,: (-,-,-)\,,\qquad\sigma_6\,: (-,-,+)\,.
    \end{split}
\end{equation}
We do not show the exact numerical values as they depend on the free parameters of the theory (gauge groups etc.); different values of the free parameters will not change the overall sign though. We also remark that for $A(S,\bar{S})=0$ and no racetrack in the superpotential, the dilaton is a flat direction. 

Note that even if some eigenvalues are negative, they do not signal a tachyon in the theory. For $A(S,\bar{S})=0$ the minimum of the potential always corresponds to a negative cosmological constant, and the absolute value of those eigenvalues is always above the Breitenlohner-Freedman bound.

\begin{figure}
    \centering
    \includegraphics[width=0.97\linewidth]{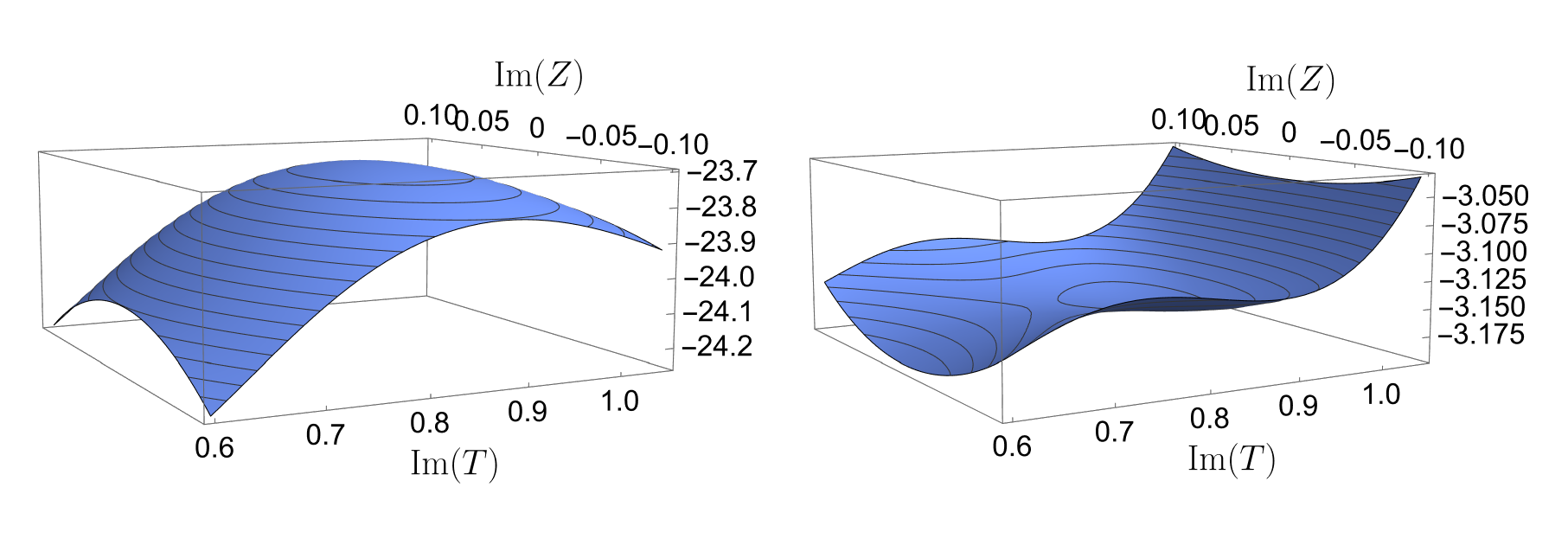}\vspace{-20pt}
    \includegraphics[width=0.47\linewidth]{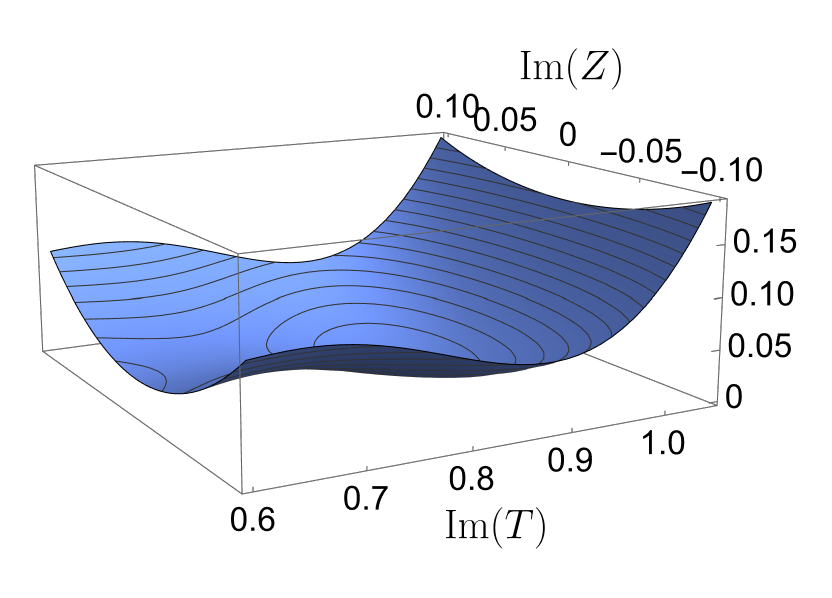}
    \caption{The scalar potential for three different values of the function $A(S,\bar S)$ around the elliptic fixed point $\sigma_4$ (for which $T=U$). We  set $\mc{H}(M)=1$ for ease of computation. First line left: $A(S,\bar S)=0$ (maximum in AdS). First line right:  $A(S,\bar S)=2.6$ (minimum in AdS). Second line: $A(S,\bar S)=3.001$ (minimum in dS).}
    \label{fig:potential4}
\end{figure}

\begin{figure}[ht]
    \centering  \includegraphics[width=0.7\linewidth]{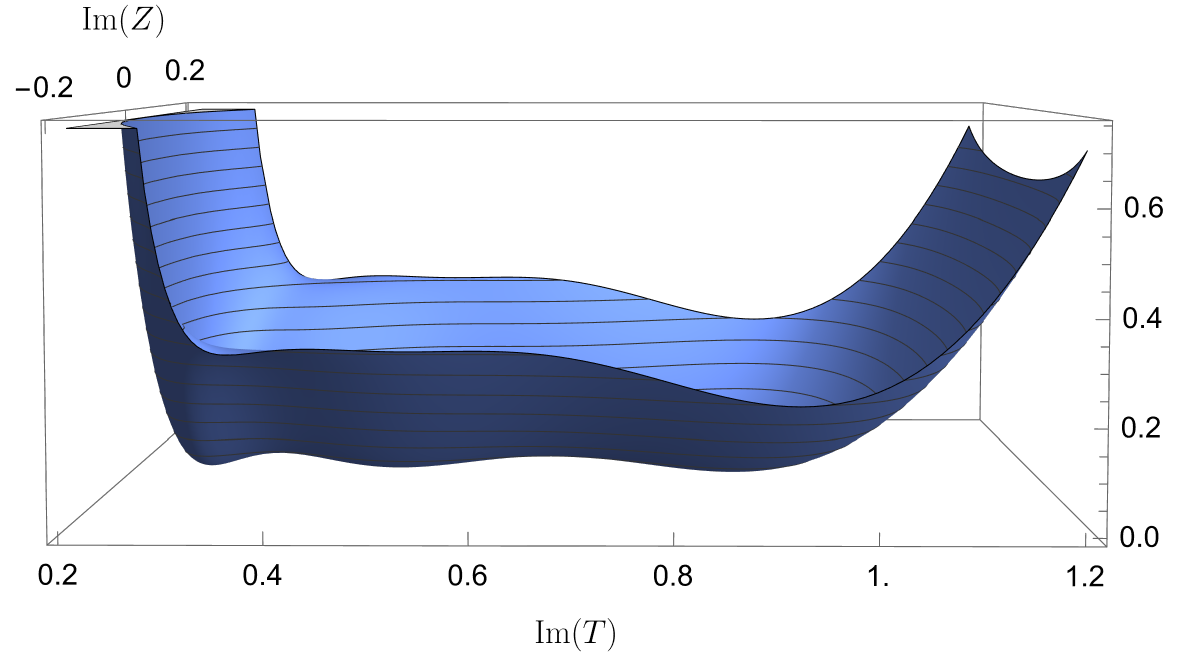}
    \caption{Potential away from elliptic fixed point $\sigma_4$, for $\mc{H}(M)=\Omega(S)=1$, $A(S,\bar{S})=3.001$ and $U=U|_{\sigma_4}$. The three minima lie at $T=T|_{\sigma_4}$ and $\Im T=0.287,\, 0.5$, both with $\Re T=-0.5$ and $Z=0$.}
    \label{fig:potoutrim}
\end{figure}

Nevertheless, a negative mass-squared can be concerning when attempting an uplift to de Sitter, unless this mass changes sign during the uplift. The latter is precisely what happens when we include nonperturbative corrections to the dilaton in the K\"ahler potential (the $k_{np}(S)$ in eq.~\cref{eq:kpot}). When we include a $A(S,\bar{S})>2$, all the negative eigenvalues in \cref{eq:eigenval} become positive, and for $A(S,\bar{S})>3$ we have (metastable) minima with a positive cosmological constant. We show this process in \cref{fig:potential4}. This behavior displays how, by changing the potential, you change not only the geometry but also the spectrum. 

Finally, a comment about the behavior at large $T$ (or $U$) is in order. In such regions, the potential diverges because the cusp form at the denominator in the superpotential is parametrizing the whole tower of KK states. Indeed, at large distances from the interior of the moduli space, those states become light, rendering the original perturbative description of the theory invalid. This behavior  is very well known in theories with $\slz$ modular symmetry~\cite{Cvetic:1991qm,Gonzalo:2018guu,Leedom:2022zdm}, and it is remarkable that threshold corrections to the gauge coupling capture a whole tower of states; as a consequence, the breakdown of the theory is manifestly evident since we can clearly tell when the EFT ceases to be valid. Note that this divergence is only present in the geometric moduli: the dilaton direction is a decompactification limit as well (towards heterotic M-theory in the case of the E$_8\times$ E$_8$ gauge group), and clearly it is the leading decompactification limit here. If we were to plot the same potential including the dilaton direction, we would see the runaway behavior at large distances typical of all 4D supergravities.\footnote{Relatedly, \cite{Cvetic:2024wsj} argued that a singular boundary diagnosed through threshold corrections should not be trusted within the original perturbative effective theory, since nonperturbative effects become essential and resolve the would-be singular behavior.}

Let us now try to venture inside the fundamental domain. To do so, we rely on a numerical search for extrema. We did not perform a comprehensive scan; in what follows, we present some interesting features we have found.

First, we focus again on the domain around the elliptic point $\sigma_4$ shown in \cref{fig:potential4}. By moving away from $\sigma_4$ in the $\Im Z-\Im T$ directions and keeping $A(S,\bar{S})=0$, we encounter two maxima at $T=T|_{\sigma_4}$ (cf.~\cref{fig:potential4}) and $\Im T=0.287$, connected by a saddle point at $\Im T=0.5$. By turning on $2<A(S,\bar{S})<3$, those extrema all become minima in an AdS background, while for $A(S,\bar{S})>3$ they all turn into minima in dS. The theory is $\mathbb{Z}_2$-symmetric around $Z=0$, and the presence of divergences for both limits of $\im{T}$ nicely reproduces the behavior of the $\slz$ counterpart \cite{Cvetic:1991qm,Leedom:2022zdm,Cribiori:2023sch}.

\section{Conclusions}
Modular symmetries have proved a powerful tool for studying a number of aspects in string theory. In this work, we have employed their properties  to develop a systematic framework for heterotic orbifold compactifications with a Wilson line turned on, which enhances the target-space duality group from the familiar genus 1 $\slz$ modular group to the genus 2 modular group $\spz$, and the natural language becomes that of degree 2 Siegel modular forms. The moduli are organized into a period matrix, allowing us to employ modular invariance for studying perturbative and nonperturbative contributions to the K\"ahler and superpotential, as well as the resulting scalar potential. Once the duality symmetry is treated as fundamental, many of the otherwise ambiguous choices that enter the effective description (such as which functions may appear, where singularities are allowed, which loci signal additional massless states, and how different corners of moduli space are related) become tightly correlated and can be addressed in an elegant and rigorous way.

A central feature of the EFT is the moduli dependence of one-loop threshold corrections. In genus 1, the one-loop corrected gauge kinetic function is fixed by modular symmetry together with the requirement that singularities only arise from additional light states at enhanced-symmetry loci. In the genus 2 setting the same logic persists, but the presence of the Wilson line introduces a new behavior: different gauge sectors can exhibit qualitatively different behavior when the Wilson line is taken to zero. Some sectors develop additional light charged states at vanishing Wilson line, while others do not. That physical distinction selects different automorphic objects in the threshold corrections and, through gaugino condensation, in the associated nonperturbative superpotential. In this way the Wilson-line modulus  enlarges the moduli space while also describing how the theory approaches a degeneration limit and whether that limit is accompanied by extra massless degrees of freedom. 

Beyond threshold corrections, we discussed systematically how to describe the residual dependence of the superpotential on nonperturbative corrections (instantons) in the geometric moduli subject to the constraints imposed by modular covariance of $W$.  
Any additional term in the superpotential has to be packaged into modular-invariant functions. For this, we propose a modular-invariant completion to $W$ in the geometric fields $T$, $U$, $Z$. We leave it as an important task for the future to either prove this proposal to be the direct analogue of the known most general holomorphic $\slz$-modular invariant function, or to construct it as a further generalization of our proposal.

With these ingredients, we constructed the full F-term scalar potential for the dilaton and the geometric moduli in a way that preserves modular invariance, and derived results that strongly constrain the vacua. We proved a set of theorems about the elliptic fixed points of the Siegel modular group to show that they are automatically critical points of modular-invariant functions. This property follows directly from the automorphic structure governing the theory. 

Combining these number-theoretical results with  supergravity yields a genus 2 analogue of the no-go theorems proved in~\cite{Leedom:2022zdm}: if supersymmetry is unbroken in the dilaton direction, then de Sitter vacua at the fixed points are excluded.

Motivated by this, we analyzed the stability properties of the fixed-point extrema by studying the spectrum and tracking how it changes under deformations of the dilaton sector. In the absence of supersymmetry-breaking dilaton dynamics, the fixed points generically sit at negative vacuum energy and exhibit a pattern of stable and unstable directions whose detailed structure depends on the specific fixed point but is robust in its qualitative features. A key result is that stringy nonperturbative corrections to the dilaton K\"ahler potential can  change this picture: they can lift negative modes in the geometric sector, turning instabilities into stable directions, and (as discussed in details in~\cite{Leedom:2022zdm}) even drive an uplift  to metastable dS minima, if strong enough. From an EFT perspective, this provides a mechanism in which dilaton-sector physics reshapes not only the vacuum energy but also the entire mass spectrum of the moduli.

Finally, we explored the vacuum structure away from symmetry-enhancement loci. While we did not attempt an exhaustive scan of the Siegel fundamental domain, our numerical search indicates that the potential can support additional extrema in the interior, connected by saddles in a way reminiscent of the genus-1 story~\cite{Novichkov:2022wvg,Leedom:2022zdm} but enriched by the Wilson-line direction and the higher-genus automorphy.

There are several immediate directions for future work. On the model-building side, it will be important to use the present analysis to construct explicit orbifold models with fully fledged gauge sectors and matter spectra, so that the threshold data, the admissible modular functions, and the dilaton dynamics are fixed rather than treated parametrically. On the vacuum-structure side, a systematic numerical survey of the Siegel fundamental domain would clarify how generic interior extrema are and what regions admit parametrically controlled uplifting. 

On the mathematical side, understanding the supersymmetric origin of the threshold behavior associated with the cusp form that remains non-vanishing when turning off the Wilson line (i.e.~$\chi_{12}$) remains an especially interesting problem. Borcherds showed~\cite{borcherds1995automorphic} that one could construct a genus-2 modular form of weight 12, called $\Phi_{12}$, via an exponential lift. Such modular form turns out to be the denominator function of the Fake Monster Lie algebra: $\Phi_{12}$ plays for the Fake Monster Lie algebra the same role that the Dedekind $\eta$ plays for affine Lie algebras. $\Phi_{12}$ has a well-defined vanishing locus with a zeroes and poles structure which is however completely different from that of $\chi_{12}$. Being able to write a SMF as the exponential lift of a weak Jacobi form is crucial to relate it to the BPS-saturated quantity which is the supersymmetric index. However, it seems that $\chi_{12}$ has no simple vanishing locus, and admits no Borcherds lift since it would violate the multiplicative symmetry of the lift~\cite{heim2013symmetries}. On the contrary, its additive lift seems to be well defined (and additive lifts do not force the zero locus to sit on special divisors). It would be interesting to find a supersymmetric object or quantity which can be written as the additive lift 
of a Jacobi form.\footnote{For forthcoming foundational mathematical work concerning the relation between the dimension-2 loci discussed here, called `Humbert surfaces' in the mathematical literature, and exponential vs additive lifts of SMFs, which goes beyond what is discussed here, see~\cite{Kidambi:2026ka}.} The output of such lifts would have a different degeneration behavior compared to SMFs with multiplicative representations (e.g.~a non-vanishing separating limit, or a structure representing sums over BPS states rather than products over roots). The physics interpretation of additive lifts are largely unexplored though. We hope to come back to this problem in the near future.


\acknowledgments
We are grateful to Ferruccio Feruglio, Andreas Malmendier, Sameer Murthy, Don Zagier for discussions, and especially to Abhiram Kidambi introducing us to the basics of degree 2 Siegel modular forms, including the concepts of Humbert surfaces, automorphic lifting, hyperelliptic curves and their principally polarized abelian varieties. We also thank Abhiram Kidambi for comments on the draft. JML and AW would like to thank Kavli IPMU for their hospitality
during the completion of this work. JML would also like to thank King's College London for their hospitality. NR thanks DESY for hospitality during various stages of this work, and the organizers and participants of the workshop ``Modular invariant Approach to the lepton and quark flavor problems: from bottom-up to top-down" in MITP. NR is supported by the ERC NOTIMEFORCOSMO, 101126304. Views and opinions expressed are, however, those of the author only and do not necessarily reflect those of the European Union or the European Research Council Executive Agency. Neither the European Union nor the granting authority can be held responsible for them. AW is partially supported by the Deutsche Forschungsgemeinschaft under Germany’s Excellence Strategy - EXC 2121 ``Quantum Universe'' - 390833306 and by the Deutsche Forschungsgemeinschaft through the Collaborative Research Center SFB1624  ``Higher Structures, Moduli Spaces, and Integrability''.

\appendix

\section{On Siegel modular forms}\label{app:SMFs}
We provide some basic definitions, results, and proofs related to Siegel modular forms used in the main text. We mostly follow Refs.~\cite{123ModForms,eichlerzagier,Klingen_1990}.

\subsection{The Siegel fundamental domain}
We can geometrically motivate the study of Siegel modular forms by considering a genus $g$ Riemann surface. Let \(\Sigma\) be such a surface, and choose a symplectic basis $\alpha_1, \cdots , \alpha_g, \beta_1, \cdots \beta_g \in H_1(\Sigma,\Z)$ with intersection pairing
$
\langle \alpha_i,\beta_j\rangle=\delta_{ij}
$. In this basis, the intersection form is represented by
\begin{equation}
J_g=
\begin{pmatrix}
0 & \mathbf I_g\\
-\mathbf I_g & 0
\end{pmatrix}\,.
\end{equation}
Let \(\omega_1,\ldots,\omega_g\) be a basis of holomorphic 1-forms on \(\Sigma\), normalized so that
\begin{equation}
\int_{\alpha_j}\omega_i=\delta_{ij}\,.
\end{equation}
The corresponding period matrix is then defined by
\begin{equation}
M_{ij}=\int_{\beta_j}\omega_i\,.
\end{equation}
A classical result shows that $M$ is symmetric and has positive definite imaginary part.  Therefore, this generalizes the notion of the classical upper half-space of genus 1 to the genus-$g$  \textit{Siegel upper half-space} $\mathbb{H}_g$:
\begin{equation}
    \mathbb{H}_g := \left \{M \in \text{Mat}(g\times g,\C) \middle\vert M^T = M, \ \Im(M) >0 \right \}\,.
\end{equation}
The symplectic basis can be changed by the action of the symplectic group 
\begin{equation}
    \spg := \left \{ M \in \text{Mat}(g \times g,\Z)\, \middle \vert \,M J_g M^T = J_g \right \}\,.
\end{equation}
The action of $\spg$ on the period of a genus $g$ surface is 
\begin{equation}
    M \mapsto \widetilde{M} := \Gamma_g \cdot M = (AM + B)(CM + D)
^{-1}, \ \  \forall\; \Gamma_g = \env{pmatrix}{A & B \\ C & D} \in \spg\,.
\label{eq:period_transf}
\end{equation}
The quotient of this action by the Siegel upper half space is called the Siegel fundamental domain $\mathcal{A}_g = \Gamma_g\setminus \mathbb{H}_g$. 

For later use, we will determine the variation of $\widetilde{M}$ from~\cref{eq:period_transf} with respect to $M$. Since $M = M^T$, we have
\begin{equation}
    \partial_{M_{km}} M_{ij} = \frac12\bigg(\delta_{ik}\delta_{jm} + \delta_{kj}\delta_{mi}\bigg)\,.
\end{equation}
Therefore
\begin{equation}
    \begin{aligned}
        \partial_{M_{km}}\widetilde{M}_{ij} &= \partial_{M_{km}}\bigg[ (AM+B)_{in}(CM+D)_{nj} \bigg]\\
        &= \frac12 \bigg((CM+D)^{-1}_{mj}(CM+D)^{-T}_{ik}+ (CM+D)^{-1}_{kj}(CM+D)^{-T}_{im}\bigg)\,.
    \end{aligned}
\label{eq:tilde_var}
\end{equation}

\subsection{Functions, forms, and sections}
We now turn to various objects defined on the Siegel fundamental domain and built from the period matrix $M$. First, we consider the extremely simple combination
\begin{equation}
    M - M^\dag =  M - M^*\,,
\end{equation}
where equality arises from the symmetric nature of the period matrix. Under $\spz$, this combination transforms as 
\begin{equation}
  \begin{aligned}
      \Gamma\cdot M - (\Gamma\cdot M)^\dag&= (A M+B)(C M+D)^{-1} - ( M^\dag C^T+D^T)^{-1}( M^\dag A^T+B^T)\\
      &= ( M^\dag C^T +D^T)^{-1}\bigg\{( M^\dag C^T+D^T)(A M+B)\\ & \quad\; -( M^\dag A^T + B^T)(C M+D)\bigg\} (C M+D)^{-1}\\
      &= (C M^* +D)^{-T} \{ M- M^\dag\} (C M+D)^{-1}\\
      &= (C M+D^{-T})\{ M- M^\dag\}(C M^*+D)^{-1}\,.
  \end{aligned}  
\label{eq:im_transform}
\end{equation}
In simplifying from the second to third lines, we have used the defining properties of elements of $ M\in\spz$: namely, $B^TD - D^TB = 0$, $C^TA-A^TC = 0$, $D^TA-B^TC = 1_4$, and $C^TB-A^TD = -1_4$. In goin to the last line, we used the fact that $\Gamma\cdot M$ is still symmetric. Another function of interest in the main text is
\begin{equation}
    Y( M, M^\dag) := -\frac14 \text{det}( M- M^\dag)\,.
\label{eq:deteq}
\end{equation}
From~\cref{eq:im_transform}, we immediately see that under $\spz$ transformations,
\begin{equation}
    Y(\Gamma\cdot M,\Gamma\cdot M^\dag) =\text{det}(C M^\dag+D)^{-1}Y( M, M^\dag)\text{det}(C M+D)^{-1}\,.
\label{eq:det_transform}
\end{equation}
Less trivially, we will need the transformation law of the derivative of $Y$. This can be determined through the use of~\cref{eq:tilde_var} and~\cref{eq:det_transform}:
\begin{equation}
    \begin{aligned}
        \partial_{ M_{ij}} Y(\Gamma\cdot  M, \Gamma\cdot M^\dag) &= \partial_{ M_{ij}}\Big(\text{det}(C M^\dag+D)^{-1}\text{det}( M- M^\dag)\text{det}(C M+D)^{-1}\Big)\end{aligned}\,,\end{equation}
\begin{equation}
        \begin{aligned}
        \frac{\partial\widetilde{ M}_{ab}}{\partial M_{ij}}\Big(\partial_{\widetilde{ M}_{ab}}Y(\widetilde{ M},\widetilde{ M}^{\dag})\Big) &= \text{det}(C M^\dag+D)^{-1}\Big(\text{det}(C M+D)^{-1} \partial_{ M_{ij}} Y( M, M^\dag)\\&\quad\, + Y( M, M^\dag)\partial_{ M_{ij}}\text{det}(C M+D)^{-1}\Big)\\
        &= \text{det}(C M^\dag+D)^{-1}\text{det}(C M+D)^{-1}\bigg\{ \partial_{ M_{ij}} Y( M, M^\dag)\\&\quad - \frac{Y( M, M^\dag)}{2}\Big(
        C_{ki} (C M+D)^{-1}_{jk}+C_{kj}(C M+D)^{-1}_{ik}\Big)\bigg\}\,,
    \end{aligned}
\label{eq:y_transform}
\end{equation}
where $\widetilde{ M} := \Gamma\cdot  M$ and we have used
\begin{equation}
    \begin{aligned}
\partial_{ M_{ij}}\text{det}(C M+D)^{-1} 
&=\frac{-1}{\text{det}(C M+D)^2}\partial_{ M_{ij}} \text{det}(C M+D)      \\
&=\frac{-1}{\text{det}(C M+D)}\frac12 \Big\{ C_{ki}(C M+D)^{-1}_{jk}+ C_{kj}(C M+D)^{-1}_{ik}\Big\}\,.
    \end{aligned}
\end{equation}
Using~\cref{eq:tilde_var}, ~\cref{eq:y_transform} gives the matrix transformation equation
\begin{equation}
    \partial_{ M}\ln(Y) = \frac{1}{Y} \partial_{ M} Y \rightarrow \frac{1}{Y}(C M+D)\partial_{ M}Y(C M+D)^T - (C M+D)C^T \,.
\end{equation}
We now turn to more complicated functions on the Siegel fundamental domain. Let $F: \mathbb{A}_g \rightarrow \C$ 
be a holomorphic function that satisfies
\begin{equation}
     F_k(\Gamma_g \cdot  M) = v(\Gamma_g) \text{det}(C M + D)^k F_k( M)\,,
\label{eq:classicalSiegel}
\end{equation}
for all $\Gamma_g\in\spg$.~\cref{eq:classicalSiegel} defines a transformation $F_k( M) \mapsto F_k(\Gamma_g \cdot  M)$ under $\spg$. We call any holomorphic function that respects~\cref{eq:classicalSiegel} a
\textit{classical} Siegel modular form of degree $g$, weight $k$, and multiplier system $v(\Gamma_g)$. The space of all classical modular forms of degree $n$ of weight $k$ is denoted by $M_k^{(n)}$. The multiplier systems of $\text{Sp}(2g,\mathbb{Z})$ with $g>1$ are far more restricted than the $\text{SL}(2,\mathbb{Z})$ case of $g=1$. In particular, for $g>1$, it was shown that the full Siegel modular group with $g>1$ admits multiplier systems only for integral $k$ and with values $v(\Gamma_g) = \pm1$~\cite{maass1964multiplikatorsysteme}. See also ~\cite{freitag2024multiplier}.

We  also need to understand how the derivative of a classical Siegel modular form transforms under $\spz$. Of course it will not be a  modular form, but it can be used to construct one. Similar to the discussion above, we have
\begin{equation}
    \begin{aligned}
        \partial_{ M_{ij}} F(\Gamma\cdot M) &= \partial_{ M_{ij}}\Big( \text{det}(C M + D)^k F( M)\Big)\,,\end{aligned}\end{equation}
\begin{equation}
\begin{aligned}
        \frac{\partial \widetilde{ M}_{ab}}{\partial M_{ij}}\partial_{\widetilde{ M}_{ab}}F(\widetilde{ M}) &= \text{det}(C M+D)^k\Big[
        \partial_{ M_{ij}}F( M) \\&\quad+\frac{k}{2}F( M) \left(C^T_{ik}(C M+D)^{-T}_{kj} + (C M+D)^{-1}_{ik}C_{kj}\right)\Big]\,.
    \end{aligned}
\end{equation}
Using~\cref{eq:tilde_var}, this takes the matrix form
\begin{equation}
    \partial_{ M} F( M)\Rightarrow \text{det}(C\tau+D)^k\Big[]
    (C\tau+D)(\partial_{ M} F( M))(C\tau+D)^T +kF( M)(C\tau+D)C^T
    \Big]\,.
\end{equation}
We see indeed that the derivative of a classical Siegel modular form does not transform as a modular form.

In the main text we are also interested in objects that transform differently than \cref{eq:classicalSiegel}. In particular, let $\mathbf{F}( M)$ be a $g\times g$ matrix of functions that satisfies
\begin{equation}
    \mathbf{F}(\Gamma_g\cdot  M) = v(\Gamma_g)(C M+D)\mathbf{F}( M)(C M+D)^T 
\label{eq:vectorSiegel}
\end{equation}
for all $\Gamma_g\in \spg$. Here $v(\Gamma_g)=\pm 1$ is again the multiplier system. We can also vectorize~\cref{eq:vectorSiegel} as
\begin{equation}
    \begin{aligned}
            \text{vec}\left[\mathbf{F}(\Gamma_g\cdot M)\right] &= v(\Gamma_g)\text{vec}\!\left[(C M+D)\mathbf{F}(M)(C M+D)^T\right]\\
            &= v(\Gamma_g)\left((C M +D) \otimes (C M +D)\right)\text{vec}[\mathbf{F}( M)]\,.
    \end{aligned}
\label{eq:vectorizedSiegel}
\end{equation}
From~\cref{eq:vectorizedSiegel}, it is clear that $\mathbf{F}( M)$ is an example of a \textit{vector-valued} Siegel modular form. However, the functions here have a particular transformation matrix formed by the tensor product of $C M+D$ with itself. For clarity, we will define vector-valued Siegel modular forms that transforms specifically with~\cref{eq:vectorSiegel} as \textit{bipartite} Siegel modular forms. 

Let us give an essential example of a bipartite Siegel modular form. If $F( M)$ is a Siegel modular function so that
\begin{equation}
    F(\Gamma_g\cdot  M) = F( M)\,,
\end{equation}
then the gradient matrix $\partial$ is a bipartite Siegel modular form. The proof of this follows from a straightforward calculation:
\begin{equation}
    \begin{aligned}
        \partial_{ M_{ij}} F( M) &=    \partial_{ M_{ij}} F(\Gamma_g\cdot M)= \frac{\partial\widetilde{ M}_{mn}}{\partial  M_{ij}}\frac{\partial}{\partial\widetilde{ M}_{mn}}F(\widetilde{ M})\\
            &= (C M+D)^{-1}_{in}\left(\partial_{\widetilde{ M}_{nm}}F(\widetilde{ M})\right)(C M+D)^{-T}_{mj}\,,
    \end{aligned}
\end{equation}
which we re-arrange to yield the matrix equation
\begin{equation}
    \partial_{ M}F( M)\; \Rightarrow\; (C M+D)(\partial_{ M}F( M))(C M+D)^T\,.
\end{equation}

\subsection{Siegel covariant derivative}
For the physics application in the main text, it is essential to develop a notion of differentiation for SMFs. As mentioned in the previous subsection, generally the derivative of a modular form is not a modular form. However, it is possible to remedy this by defining a covariant derivative.

In theories with $\slz$ modular symmetries, one has to introduce the notion of a quasi-modular form, whose ring is closed under derivation~\cite{quasimodular}.  For every quasi-modular form  there exists a unique almost holomorphic modular form which  modular-completes it. A classic example is the weight-$2$ Eisenstein series $E_2$, which is quasi-modular (of depth $1$) and its almost holomorphic completion is
\begin{equation}
\widehat E_2(\tau)=E_2(\tau)-\frac{3}{\pi}\,(\Im\tau)^{-1}\,.
\end{equation}
One can relate $E_2$ to the modular discriminant $\Delta=\eta^{24}$ via
\begin{equation}
E_2(\tau)=\partial_\tau\log\Delta(\tau)=24\,\partial_\tau\log\eta(\tau)\,.
\end{equation}
For the genus-2 case, the elements of this derivative  are present already in the previous subsection. Let $\chi$ be a classical Siegel modular form of weight $k$, and let $Y$ be the determinant built from the period matrix as defined in~\cref{eq:deteq}. The matrix-valued covariant derivative of $\chi$ is
\begin{equation}
    \nabla_{ M}\chi := \partial_{ M}\chi + k\chi\partial_{ M}\ln(Y) \,.
\end{equation}
This combination will transform as a bipartite modular form with a factor of $\text{det}(C M+D)^k$. That is, the combination
\begin{equation}
    \frac{\nabla_{ M}\chi}{\chi}
\end{equation}
is a bipartite Siegel modular form. Each entry of this matrix is then almost meromorphic, namely it is a finite polynomial in the entries of $Y^{-1}$ and the meromorphy comes from the factor $\chi^{-1}$ (see also~\cite{Klemm:2015iya}).

\subsection{Siegel cusp forms}
Returning to classical Siegel modular forms, we  define the Siegel operator $\Phi: M_k^{(n)} \to M_k^{(n-1)}$ as the mapping between classical modular forms of degree $n$ of (level $1$ and)   weight $k$ to modular forms of degree $n-1$ of (level $1$ and) weight $k$:
\begin{equation}
    \Phi F_k=\lim_{u\rightarrow\infty} F_k\begin{pmatrix}
        T & 0\\0& iu
    \end{pmatrix}\,.
\end{equation}
The elements of ker$(\Phi)$ are called \emph{cusp forms}.

Let us specialize to the Siegel modular threefold $\mathcal{A}_2$. A classical theorem by Igusa \cite{Igusa1962} states that the ring of algebraically independent SMFs over $\mathbb{C}$ is generated by $\sef$, $\ses$, $\chi_{10}$ and $\chi_{12}$.\footnote{To be precise, Igusa later~\cite{igusamodforms}  showed that for the full ring of $\spz$ modular forms, one needs an additional generator $\chi_{35}$. The ring then reads 
\begin{equation}
    M(\spz)
\;\cong\;
\mathbb{C}[\sef,\ses,\chi_{10},\chi_{12},\chi_{35}]/(R(\sef,\ses,\chi_{10},\chi_{12})-\chi_{35}^2)\,,
\end{equation}
with $R$ and explicit isobaric polynomial in $\sef$, $\ses$, $\chi_{10}$ and $\chi_{12}$. 
Hence, $\chi_{35}$ is algebraically dependent on the other elements.} 
The weight 4 and 6 SMFs are the degree 2 Eisenstein series defined as 
\begin{equation}
\mathcal{E}_k( M)=\sum_{C,D} \det (C M+D)^{-k}\,,
\end{equation}
where the summation is over all inequivalent bottom rows $(C\;\;D)$ of elements of $\Gamma_2$.  The latter two modular forms are the cusp forms, normalized as \cite{Igusa1962}
\begin{equation}\label{eq:cusps}
    \begin{split}
        \chi_{10}:=& -\frac{43867}{371\cdot 2^{12}\cdot 3^5\cdot 5^2}\left(\sef\ses-\mathcal{E}_{10}\right)\,, \\
        \chi_{12}:=&\frac{131\cdot 593}{2^{13}\cdot 3^7\cdot 5^3\cdot 7^2\cdot 337}\left(3^2\cdot7^2\sef^3+2\cdot5^3\ses^2-691\mathcal{E}_{12}\right)\,.
    \end{split}
\end{equation}
A modular function is a modular form of weight $0$.

A SMF $F_k$ of weight $k$ has a Fourier-Jacobi expansion~\cite{eichlerzagier}
\begin{equation}
F_k( M)=\sum_{m\geq 0}\varphi_{k,m}(T,Z)p^m\,,\quad p=e^{2\pi i  U} \,,
\end{equation}
where the coefficients $\varphi_{k,m}$ of the expansion are Jacobi forms of weight $k$ and index $m$,
\begin{equation}
    \varphi_{k,m}(T,Z)=\sum_{l,b}c(l,b)q^{l}y^{b}\,, \qquad q=e^{2\pi i T}\,,\quad y=e^{2\pi i Z}\,.
\end{equation}
If we define
\begin{equation}
    A= \frac{1}{24}\sum_l c(0,l)\,,\quad B=\frac{1}{2}\sum_{l>0}lc(0,l)\,,\quad C=\frac{1}{4}\sum_{l}l^2 c(0,l)\,,
\end{equation}
then the exponential lift of the nearly holomorphic Jacobi form $\varphi_{0,t}$ is the product
\begin{equation}
    \text{Exp-Lift}(\varphi)( M)=q^Ay^Bp^C\prod_{n,m,l>0}\left(1-q^ny^lp^{tm}\right)^{c(nm,l)}\,,
\end{equation}
which defines a meromorphic modular form of weight $c(0, 0)/2$ with respect to the subgroup $\Gamma_t$ of $\spz$. 
If $\varphi$ is a \emph{weak Jacobi form}, then we
actually have $C= tA$, and $\sum_{l}c(n,l)=0 \;\;\forall \;n>0$. The zeroes or poles of meromorphic SMFs that can be cast as exponential lifts can be identified by choosing $T,\,U,\,Z$ such that $q^ny^lp^{tm}=1$ in one of the factors: this will make that factor vanish, so that the product either vanishes or diverges~\cite{gritsenko1998automorphic}. For the results stated in the main text, we should look at the behavior near $Z=0$: the leading zero or pole near $Z = 0$, up to numerical coefficients, is then
\begin{equation}
    q^Ap^{tA}\prod_{m>0}\left(1-p^{tm}\right)^{24 A}\prod_{n>0}\left(1-q^{n}\right)^{24 A}\prod_{l}\left(1-y^{l}\right)^{c(0,l)}\sim Z^{\sum_{l}c(0,l)}\eta(T)^{24 A}\eta(tU)^{24 A}
\end{equation}
with $l<0$ (meaning that we are restricting on the the dimension-2 locus where $Z=0$.) The asymptotic behavior of the cusp forms for $Z\rightarrow 0$ is~\cite{Igusa1962}
\begin{align}
    \chi_{10} &=Z^2 \eta^{24}(T) \eta^{24}(U) + \mc{O}(Z^4)\,,\\
\chi_{12} &= \eta^{24}(T) \eta^{24}(U)+ \mc{O}(Z^2)\,.
\end{align}

\subsection{Generators of Stab$(\sigma_i)$}\label{app:generators}
The generators mentioned in \cref{tab:fixedpts} are given by:
    \begin{align}
        h^{(1)}_1 = \begin{pmatrix}
                        0 & -1 & -1 & -1\\
                        0 & 0 & -1 & 0\\
                        0 & 0 & 0 & -1\\
                        1 & 0 & 0 & 1
                    \end{pmatrix}\coma
      h^{(2)}_1=h^{(3)}_3=h^{(4)}_2 = \begin{pmatrix}
                        0 & 1 & 0 & 0\\
                        1 & 0 & 0 & 0\\
                        0 & 0 & 0 & 1\\
                        0 & 0 & 1 & 0
                    \end{pmatrix}\coma
    h_2^{(6)} = \begin{pmatrix}
                        -1 & 0 & 0 & 0\\
                        0 & -1 & 0 & 0\\
                        0 & 0 & -1 & 0\\
                        0 & 0 & 0 & -1
                    \end{pmatrix}
    \end{align}
   \begin{align}
       h^{(2)}_2 = \begin{pmatrix}
                        -1 & 1 & 1 & 0\\
                        1 & 0 & 0 & 1\\
                        -1 & 0 & 0 & 0\\
                        1 & -1 & 0 & 1
                    \end{pmatrix}\coma
         h^{(3)}_1 = \begin{pmatrix}
                        0 & 0 & 1 & 0\\
                        0 & -1 & 0 & 0\\
                        -1 & 0 & 0 & 0\\
                        0 & 0 & 0 & -1
                    \end{pmatrix}\coma
        h^{(3)}_2 = \begin{pmatrix}
                        0 & 0 & -1 & 0\\
                        0 & 0 & 0 & 1\\
                        1 & 0 & 0 & 0\\
                        0 & -1 & 0 & 0
                    \end{pmatrix},\,
   \end{align}
        \begin{align}
        h^{(4)}_1 = \begin{pmatrix}
                        0 & 0 & 0 & -1\\
                        1 & 0 & 1 & 0\\
                        0 & 1 & 0 & 1\\
                        -1 & 0 & 0 & 0
                    \end{pmatrix}\coma       
        h^{(4)}_3 = \begin{pmatrix}
                        0 & 0 & 1 & 0\\
                        0 & 0 & 0 & -1\\
                        -1 & 0 & -1 & 0\\
                        0 & 1 & 0 & 1
                    \end{pmatrix}\coma
        h^{(5)}_1 = \begin{pmatrix}
                        0 & 0 & 0 & 1\\
                        0 & 0 & 1 & 1\\
                        1 & -1 & 0 & 0\\
                        -1 & 0 & 0 & 0
                    \end{pmatrix},\,
    \end{align}
     \begin{align}
       h^{(5)}_2 = \begin{pmatrix}
                        0 & 0 & 1 & 1\\
                        0 & 0 & 1 & 0\\
                        0 & -1 & 0 & 0\\
                        -1 & 1 & 0 & 0
                    \end{pmatrix}\coma
        h^{(5)}_3 = \begin{pmatrix}
                        0 & 0 & 0 & 1\\
                        0 & 0 & -1 & 0\\
                        0 & -1 & 0 & 0\\
                        1 & 0 & 0 & 0
                    \end{pmatrix}\coma 
                     M_4= h^{(6)}_1 = \begin{pmatrix}
                        0 & 0 & 1 & 0\\
                        0 & 0 & 0 & 1\\
                        -1 & 0 & -1 & 0\\
                        0 & -1 & 0 & 0
                    \end{pmatrix}\fstop
 \end{align}

In the main text, we will be interested in eigenvalues in matrices constructed from the above generators. To be more precise, writing the stabilizer matrices as in~\cref{eq:stabmatrix}:
\begin{equation}
h_a^{(i)}=\begin{pmatrix}
    A_a^{(i)}& B_a^{(i)}\\
    C_a^{(i)}& D_a^{(i)}\,.
\end{pmatrix}
\end{equation}
We will be interested in eigenvalues of the quantities
\begin{equation}
    N^{(i)}_a := C_a^{i}\sigma_i + D_a^{(i)}\,,
\label{eq:transform_matrix}
\end{equation}
where $\sigma_i$ is the $i$-th fixed point according the labels in~\cref{tab:fixedpts}.  These are of course the matrices that enter in the transformation laws of both classical Siegel modular forms and, more importantly, the bipartite Siegel modular forms.

\begin{table}
    \centering
    \begin{tabular}{|c|c|}
    \hline
        Matrix & Eigenvalues \\
         \hline
         $N^{(1)}_1$ & $\exp[4i\pi/5]$, $\exp[3i\pi/5]$ \\
        \hline
         $N_1^{(2)}$ & $-1,1$ \\
         $N_2^{(2)}$ & $\exp[i\pi/4]$, $\exp[-i\pi/4]$\\
         \hline
         $N_1^{(3)}$  & $-1,-i$ \\
         $N_2^{(3)}$ & $i,-i$ \\
         $N_3^{(3)}$ & $-1,1$ \\
         \hline
         $N_1^{(4)}$ & $-\sqrt{-\omega(1+\omega)},\sqrt{-\omega(1+\omega)}$ \\
         $N_2^{(4)}$ & -1,1 \\
         $N_3^{(4)}$ & $-(1+\omega)$, $(1+\omega)$\\
         \hline
         $N_1^{(5)}$ & $-i$, $i$ \\
         $N_2^{(5)}$ & $-i$, $i$ \\
         $N_3^{(5)}$ & $-1$, $1$ \\
         \hline
         $N_1^{(6)}$ &  $-i$, $-(1+\omega)$\\
     \hline
    \end{tabular}
    \caption{Eigenvalues of the matrices $N^{(i)}_a$ defined by the fixed points and their stabilizer group generators as defined in~\cref{eq:transform_matrix}. Note that we still identify $\omega = \exp[2i\pi/3]$.}
    \label{tab:generator_eigs}
\end{table}

We will also be interested in the eigenvectors of several matrices related to the $\{N^{(i)}_a\}$. We define
\begin{equation}
    M_a^{(i)} = 1_4 - N_a^{(i)}\otimes N_a^{(i)}\,.
\end{equation}
The $\{M^{(i)}_a\}$ are of course the transformation matrices of the bipartite Siegel modular forms. For the matrices associated with the fixed points $\{\sigma_3,\sigma_5\}$, we will need the eigenvectors of the $\{M^{(i)}_a\}$ that have zero eigenvalue. We list these here:
\begin{table}[h!]
    \centering
    \begin{tabular}{|c|c|}
    \hline
        Matrix & Eigenvectors \\
         \hline
         $M^{(3)}_1$ & $(0,0,0,1)^T$ \\
         $M^{(3)}_2$ & $(0,0,1,0)^T$, $(0,1,0,0)^T$\\
         $M^{(3)}_3$ & $(1,0,0,1)^T$, $(0,1,1,0)^T$\\
     \hline
        $M^{(5)}_1$ & $(1,2,0,-2)^T$, $(0,-1,1,0)^T$\\
        $M^{(5)}_2$ & $(-2,2,0,1)^T$, $(0,-1,1,0)^T$\\
        $M^{(5)}_3$ & $(-1,0,0,1)^T$, $(-1,1,1,0)^T$\\
    \hline
    \end{tabular}
    \caption{Eigenvectors (up to normalization) with null eigenvalue for the matrices $M^{(i)}_a$ with $i=3,5$.}
    \label{tab:generator_eigenvects}
\end{table}

\bibliographystyle{JHEP}
\bibliography{MainBib}

\end{document}